\edef\mybasewidth{\the\wd0 }
\lstdefinestyle{custompython}{
  language=Python,
  basicstyle=\small\sffamily,
  keywordstyle=\bfseries\color{blue!40!black},
  commentstyle=\itshape\color{green!40!black},
  columns=fixed,basewidth=\mybasewidth,
  morekeywords={None,True},
  belowskip=0pt,
  belowcaptionskip=-10pt,
  breaklines=true}
\newtheorem{theorem}{Theorem}
\newtheorem{definition}{Definition}
\def\BibTeX{{\rm B\kern-.05em{\sc i\kern-.025em b}\kern-.08emT\kern-.1667em\lower.7ex\hbox{E}\kern-.125emX}}
\renewcommand{\paragraph}[1]{\smallskip\noindent{\em\bfseries #1}}
\newcommand{\fulltitle}{Spatial Locality and Granularity Change in Caching}
\newcommand{\fullproblem}{Granularity-Change Caching Problem}
\newcommand{\fullprob}{Granularity-Change Caching}
\newcommand{\problem}{GC Caching Problem}
\newcommand{\prob}{GC Caching}
\newcommand{\pr}{GC}
\newcommand{\offproblem}{Offline Granularity-Change Caching Problem}
\newcommand{\offprob}{Offline GC Caching}
\newcommand{\fullmodel}{Granularity-Change Caching Model}
\newcommand{\element}{item}
\newcommand{\block}{block}
\newcommand{\bsizemath}{B}
\newcommand{\bsize}{$\bsizemath{}$}
\newcommand{\fulltimemath}{a}
\newcommand{\fulltime}{$\fulltimemath{}$}
\newcommand{\opt}{optimal cache}
\newcommand{\elempol}{Item Cache}
\newcommand{\blockpol}{Block Cache}
\newcommand{\ourpolicyfull}{Item-Block Layered Partitioning}
\newcommand{\ourpolicy}{IBLP} 
\newcommand{\elempart}{\element{} layer}
\newcommand{\blockpart}{\block{} layer}
\newcommand{\epsizemath}{i}
\newcommand{\bpsizemath}{b}
\newcommand{\epsize}{$\epsizemath{}$}
\newcommand{\bpsize}{$\bpsizemath{}$}
\newcommand{\ourrandpolicyfull}{Granularity-Change Marking}
\newcommand{\ourrandpolicy}{GCM}
\begin{document}

\title{\textbf{\fulltitle}}


\author{{Nathan Beckmann}\\
{beckmann@cs.cmu.edu}\\
{Carnegie Mellon University} \and
{Phillip B. Gibbons}\\
{gibbons@cs.cmu.edu}\\
{Carnegie Mellon University} \and
{Charles McGuffey}\\
{cmcguffey@reed.edu}\\
{Reed College}}

\date{}


%
\maketitle

\begin{abstract}

Caches exploit temporal and spatial 
locality to allow a small memory to provide fast access to data
stored in large, slow memory.
The temporal aspect of locality is extremely well studied and understood,
but the spatial aspect much less so.
We seek to gain an increased understanding of spatial locality by 
defining and studying the \emph{\fullproblem{}}. 
This problem modifies the traditional caching setup by grouping data 
\element{}s into \block{}s, such that a cache can choose \emph{any subset}
of a \block{} to load 
for the same cost as loading any individual \element{} in the \block{}.

We show that modeling such spatial locality significantly changes the caching problem. 
This begins with a proof that \fullprob{} is NP-Complete in the offline setting, 
even when all \element{}s have unit size and all \block{}s have unit load cost. 
In the online setting, we show a lower bound for competitive ratios of 
deterministic policies that is significantly worse than traditional caching. 
Moreover, we present a deterministic replacement policy called 
\emph{\ourpolicyfull{}} and show that it obtains a competitive ratio close to 
that lower bound.
Moreover, our bounds reveal a new issue arising in the \fullproblem{},
where the choice of offline cache size affects the 
competitiveness of different online algorithms relative to one another.
To deal with this issue, we extend a prior (temporal) locality model to account for spatial 
locality, and provide a general lower bound in addition to an upper bound for
\ourpolicyfull{}.

\end{abstract}

\section{Introduction}
\label{sec:intro}

A common feature of computer systems is that \emph{block granularity} changes at different levels of the memory/storage hierarchy.
This paper presents the first study of how granularity change affects caching. 
We define the \emph{Granularity-Change (GC) Caching Problem},
prove new adversarial competitive bounds for the problem,
develop an online GC caching policy with a better 
competitive ratio than traditional cache policies in this setting,
define a new locality model for the problem,
and prove upper and lower bounds within this locality model.

\paragraph{Why does block granularity change?}
Given that a large and fast memory does not exist, real systems make use of a 
hierarchy ranging from small, fast memories to larger and slower storage 
devices~\cite{hennessy2011computer},
typically structured as a hierarchy of caches.
Cache hierarchies provide the illusion of a large, fast memory by
exploiting temporal and spatial locality in data accesses.
\emph{Temporal locality} refers to when the same data item is
referenced multiple times in quick succession.
\emph{Spatial locality} refers to when nearby data items are
referenced in quick succession.

Each level of the storage hierarchy organizes its data in blocks 
to simplify management and reduce overheads.
For example,
SRAM caches typically consist of 64\,B ``lines'',
DRAM of 2-4\,KB ``rows'',
and flash/disk of 4\,KB ``pages''.\footnote{In fact, there can be 
different granularities for reads and writes, 
e.g., ``erase blocks'' in flash can be many MBs. We focus on reads in this work.}
Organizing data in blocks is a simple way for caches to exploit spatial locality,
as each block can typically hold multiple data items.

\begin{figure}
\begin{center}
\includegraphics[width=0.6\linewidth]{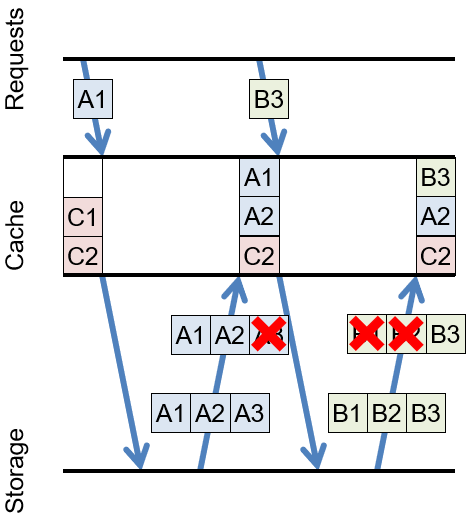}
\end{center}
\vspace{-0.05in}
\caption[Opportunity of the \problem{}]{ In the \problem{}, caches can
  load any subset of the larger-granularity block from the level
  below them, for the same cost.  Here, although only A1 is requested, 
  the cache loads the subset \{A1,A2\} of the block \{A1,A2,A3\}.}
\label{fig:eye-candy}
\vspace{-0.1in}
\end{figure}

Most caches today ignore granularity change and only load data of their own 
granularity.
But this misses an optimization opportunity to load some or all of the larger-granularity block
at minimal cost (see Figure~\ref{fig:eye-candy}),
as the lower level has already fetched the entire block.
Motivated by this observation, we ask:
\emph{What caching opportunities and challenges are introduced by granularity change?}

\paragraph{What does prior caching work say about block granularity?}
The original caching problem is well studied and understood. 
Belady~\cite{belady1966study} and Mattson~\cite{mattson1970evaluation}
separately devised optimal solutions for caching with unit-size, unit-cost 
\element{}s. 
Sleator and Tarjan~\cite{sleator1985amortized} provided both lower and upper 
bounds for the cost ratio when 
comparing the performance of \emph{online} caches, which must make decisions 
as requests arrive, against \emph{offline} caches, which are allowed to view 
the entire trace when making decisions. 
Fiat et al.~\cite{fiat1991competitive} extended this work to randomized 
algorithms and showed ways of approximating
online policies using other online policies.

Other variants of caching have been considered, and considerable work 
has been done on complexity and algorithms for these variants~\cite{
Chrobak90newresults,
albers1999page,
bar2001unified,
brehob2004optimal,
young1994thek, even2018online}.
Caching with variable-size \element{}s~\cite{chrobak2012caching, berger2018foo, young2002line} appears to be 
similar to \prob{}, since one could think of different subsets of
a \block{} as differently sized \element{}s.
The critical difference is that, unlike in variable-size caching,
\element{}s in a \block{} can be accessed, cached, and evicted individually.

In \prob{}, choosing which \element{}s to load is an 
additional dimension with significant impact on performance.
To our knowledge, there is no prior theoretical work that accounts for the granularity change optimization discussed above.
Prior caching work focuses on the temporal locality of \element{}s within the access trace (where an \element{} can be of fixed or variable size), and misses the significant impact of spatial locality among data items in the level below.
Models of computation that account for transfers to and from the cache in ``blocks''
(e.g., the \textit{External Memory} model~\cite{AggarwalV88},
the \textit{Ideal Cache} model~\cite{frigo99}, the \textit{Multicore-Cache}
model~\cite{BCGRCK08}, and the \textit{Parallel Memory Hierarchy}
model~\cite{alpern1993pmh}) permit \element{}s in a block to be individually
accessed from cache, but not individually cached or evicted.
As such, the ``blocks'' are the (smaller) granularity of the cache itself (e.g., A1 in Figure~\ref{fig:eye-candy}) and not the (larger)
granularity of the level below (A1 A2 A3), as in our model.


The systems community uses several approaches to handle
granularity change. 
There is work on scheduling memory controllers at granularity boundaries~\cite{
fan2001memory,
mutlu2007stall,
mutlu2008parallelism,
zheng2008mini,
zhu2005performance, zhu2002fine}, address mapping techniques~\cite{
sudan2010micro,
lin2001designing,
zhang2000permutation, yoon2012row}, row-buffer management~\cite{
miura2001dynamic,
park2003history,
stankovic2005dram,xu2009prediction}, and \element{}-to-\block{}
allocation~\cite{calder1998cache, 
chilimbi1999cache, attiya2018remote, petrank2002hardness}.
Recently, DRAM caches account for granularity 
change by taking some or all of the larger-granularity block into the smaller-granularity cache on 
loads~\cite{qureshi2012fundamental,jevdjic2013stacked,jevdjic2014unison}.
We provide the first theoretical framework to better understand 
and guide these designs.

\paragraph{Contributions.}
We investigate the effects of granularity change on caching. 
Our results include:
\begin{compactenum}
\item[\emph{(i)}] We develop a model for caching at a granularity boundary, 
  called the \fullproblem{};
\item[\emph{(ii)}] We show that \offprob{} is NP-Complete; 
\item[\emph{(iii)}] We provide a lower bound on the competitive ratio of deterministic
replacement policies in \pr{} Caching; 
\item[\emph{(iv)}] We design and analyze \ourpolicyfull{}, a practical GC caching policy, 
and we prove an upper bound that is much 
tighter than any policy that considers only a single granularity (Table~\ref{tab:int-points} highlights our bounds);
\item[\emph{(v)}] We discuss a new issue that arises in \prob{} where
the choice of offline cache size affects the 
competitiveness of different online algorithms relative to one another, and
we develop a locality model for \prob{} that admits upper and lower bounds
for fault rate based solely on the system's 
cache size and the workload.
\end{compactenum}

\section{The Model}
\label{sec:model}

The \fullmodel{} consists of a single level of memory (cache) that receives a 
series of requests, referred to as accesses, to data \element{}s. 
If the \element{} is in the cache, then the request is served and the cache 
is not charged. 
If the \element{} is not in the cache, then the cache must load the \element{} 
from the subsequent level of memory or storage.
If this load causes the amount of data in cache to exceed the cache size $k$, 
then \element{}s must be evicted from the cache to remedy the situation. 

What makes the \fullmodel{} unique is that the universe of \element{}s is 
partitioned into \emph{\block{}s} of up to \bsize{} \element{}s, such that
the cache can load \emph{any subset} of the \element{}s in a \block{} 
for unit cost;
i.e., \element{}s after the first are ``free'' (\bsize=3 in Figure~\ref{fig:eye-candy}).
When each \element{} is in a different \block{}, this model 
exactly matches the traditional caching model. 

The \block{}s represent the larger data granularity used by the subsequent level of 
the memory hierarchy. 
In such systems, there is typically a small memory buffer used to 
handle data as it is being read or written. 
The cost of moving data from the subsequent level into this buffer is typically 
large relative to the cost of operating on the buffer itself. 
Hence, once \element{}s are brought into the buffer,
they can be accessed at low cost, motivating our model~\cite{hennessy2011computer,
  jacob2010memory}. 

\begin{definition}
  In the \emph{\fullproblem}, we are given \emph{(i)} a cache of
  size $k$, \emph{(ii)} an (online or offline) trace $\sigma$ of requests 
  to \element{}s, and \emph{(iii)} a partitioning of the \element{}s into 
  disjoint \block{}s (sets) such that no partition contains more than \bsize{} \element{}s.
  Starting with an empty cache, the goal is to minimize the number of times
  that an \element{} is not present in the cache when requested in $\sigma$. 
  When a requested \element{} is not in cache, any subset of that \element{}'s 
  \block{} can be loaded, so long as the subset contains the \element{}. 
\end{definition}

\begin{table}[t]
  \footnotesize
  \resizebox{\linewidth}{!}{%
    \newcommand{\headerfont}[1]{#1}
\newcommand{\entryfont}[1]{{\normalsize #1}}
\begin{tabular} {@{}c|ccc@{}}
\toprule
\bf Setting             & \bf Sleator-Tarjan Bound       & \bf Our GC Lower Bound                                              & \bf Our GC Upper Bound                                               \\
\midrule
\headerfont{Constant Augmentation} & \entryfont{$k=2h \Rightarrow 2\,\times$}   & \entryfont{$k \approx 2h \Rightarrow \bsizemath{}\,\times$}                  & \entryfont{$k \approx 2h \Rightarrow 2\bsizemath{}\,\times$}                  \\
\headerfont{Ratio = Augmentation}  & \entryfont{$k = 2h \Rightarrow 2\,\times$} & \entryfont{$k \approx{} \sqrt{\bsizemath{}}h \Rightarrow \sqrt{B}\,\times$} & \entryfont{$k \approx{} \sqrt{2\bsizemath{}}h \Rightarrow \sqrt{2B}\,\times$} \\
\headerfont{Constant Ratio}        & \entryfont{$k=2h \Rightarrow 2\,\times$}   & \entryfont{$k \approx{} \bsizemath{}h \Rightarrow 2\,\times$}                & \entryfont{$k \approx{} \bsizemath{}h \Rightarrow 3\,\times$}                 \\
\bottomrule
\end{tabular}
    }
  \caption[Salient bounds]{Salient bounds
    for online cache size $k$ and optimal cache size $h$,
    shown as: Augmentation $\Rightarrow$ Competitive Ratio.
Compared to traditional caching (Sleator-Tarjan Bound), 
the spatial locality in \pr{} Caching adds a penalty of $\Theta{}(\bsizemath{})\times{}$ 
to the product of competitive ratio $\times$ augmentation.}
\label{tab:int-points}
\end{table} 

\paragraph{Locality vs.\ traditional caching models.}
In traditional caching models, all hits come from \emph{temporal locality},
i.e., when an \element{} remains in cache between subsequent accesses.
In \pr{} Caching, hits can also come from \emph{spatial locality},
i.e., when an \element{} $I$ is in cache due to an earlier access to a \emph{different} \element{} in the same \block{}.
(Any hits to item $I$ beyond the first are due to temporal locality, since $I$ would have been brought in cache anyway.)

\paragraph{Assumptions.}
We assume that each \element{} has unit size and each \block{} has unit cost. 
We also assume that the caches we study are much larger than 
the \block{} size, i.e., $k \gg \bsizemath{}$. 
In addition, we limit our results to deterministic policies. 

\paragraph{Baseline policies.}
We consider two baseline cache designs.
An \emph{\elempol{}} loads only the requested \element{} from a \block{};
i.e., it is a traditional cache.
By contrast, a \emph{\blockpol{}} loads all the \element{}s in a requested block
and also evicts them together;
i.e., it increases the cache's granularity to operate on \block{}s instead of \element{}s.
\elempol{}s perform well on temporal locality and poorly on spatial locality, 
whereas \blockpol{}s are the opposite. 

\paragraph{Locality Model.}
To expand our analysis beyond competitive ratios on cost, we extend the locality of reference model by Albers, Favrholdt, and Giel~\cite{albers2005paging} to account for block granularity. 
Their model adds a function $f(n)$ (which will increasing and concave for real traces) that characterizes the maximum size of a working set (the number of distinct \element{}s accessed) in a window of $n$ accesses across a trace.
They use this model to provide bounds on the fault rate (the number of faults per access) of various replacement policies as a function of $f(n)$. 

We extend this model to account for block granularity by adding a similar function $g(n)$ to account for the number of distinct \emph{\block{}s} accessed in a window of size $n$. 
The value of $g(n)$ can range from $f(n)$ when each \element{} comes from a different \block{} to $f(n) / \bsizemath{}$
when entire \block{}s are accessed at a time.
The ratio $f(n) / g(n)$ measures how much spatial locality occurs in a given trace.
With this extended model, we can provide bounds on the fault rate of policies in the \fullproblem{}.

\section{Complexity Analysis}
\label{sec:complexity}

In this section, we investigate the complexity of \prob{}.
Using a reduction from variable-size caching, we are able to show that 
the \offprob{} is NP-Complete, even with unit \block{} cost and 
unit \element{} size. 

\begin{theorem}
The \offproblem{} is NP-Complete.
\vspace{-0.1in}
\end{theorem}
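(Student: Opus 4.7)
A polynomial-size certificate specifies, for each request, which subset of the enclosing \block{} the cache loads; the resulting fault count can be verified in linear time, so \offprob{} lies in NP.

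\textbf{The reduction.} For hardness, I would reduce from the offline unit-cost variable-size caching problem, shown NP-hard by Chrobak et al.~\cite{chrobak2012caching}. Given a variable-size instance with items $1,\ldots,n$ of integer sizes $s_1,\ldots,s_n$, cache capacity $K$, and request sequence $\sigma$, construct a GC instance with the same capacity $K$, with one \block{} $\beta_i = \{i_1,\ldots,i_{s_i}\}$ per original item, block-size bound $\bsizemath = \max_i s_i$, and a trace $\sigma'$ obtained by replacing each occurrence of item $i$ in $\sigma$ by the cluster $i_1, i_2, \ldots, i_{s_i}$. The construction is polynomial and respects the theorem's hypotheses (unit \element{} size, unit \block{} load cost).

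\textbf{Correspondence of optimal costs.} The core claim to establish is that the optimal variable-size cost on $\sigma$ equals the optimal GC cost on $\sigma'$. One direction is immediate: any variable-size policy induces a GC policy that, at the first miss of each cluster, loads all of $\beta_i$ simultaneously and later evicts $\beta_i$ as a unit; capacity is preserved exactly (each $\beta_i$ occupies $s_i$ GC units, the same as item $i$ in the source) and fault counts match. For the reverse, I would convert an arbitrary GC solution into a \emph{block-aligned} one of no greater cost, whose cache state is always a disjoint union of complete blocks and only changes at cluster boundaries; such a state corresponds bijectively to a legal variable-size cache state serving $\sigma$ with the same fault count.

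\textbf{Main obstacle.} The delicate step is the block-aligning transformation, which I plan to handle by two local exchange arguments. First, within a single cluster the policy must resolve every $i_j$ that is absent; if it does so via multiple separate loads, coalescing them into a single load of the missing portion of $\beta_i$ at the first fault weakly reduces the fault count while preserving feasibility, since all those sub-items were slated to be resident by the cluster's end. Second, any partial residency of $\beta_i$ held between clusters can be evicted without penalty: the next cluster on $\beta_i$ already faults on its first missing sub-item, after which the entire block is refilled for free, so the freed capacity strictly helps. Iterating these swaps yields a block-aligned policy of no greater cost, closing the equivalence and completing the NP-hardness reduction.
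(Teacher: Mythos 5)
There is a genuine gap, and it sits exactly where you flagged it. Your construction replaces each request to item $i$ by a \emph{single pass} $i_1,\ldots,i_{s_i}$, whereas the paper replaces it by $s_i$ round-robin repetitions of the active set ($s_i^2$ accesses). That repetition is not cosmetic. Your first exchange argument claims that coalescing a cluster's loads into one load at the first fault ``preserves feasibility, since all those sub-items were slated to be resident by the cluster's end'' --- but in a single pass they are not: each $i_j$ is requested exactly once in the cluster and may be evicted immediately after its access. A GC policy can therefore sweep through $\beta_i$ holding only a small sliding window of the \block{}, paying a few extra faults in exchange for leaving most of the cache free to retain \emph{other} \block{}s across the cluster. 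The variable-size cache has no analogous move (item $i$ occupies $s_i$ units whenever it is resident), so the GC optimum on $\sigma'$ can be strictly smaller than the variable-size optimum on $\sigma$, and the cost correspondence --- hence the hardness reduction --- breaks.

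Concretely: let $K=10$, let $a$ have size $10$, let $b_1,\ldots,b_5$ have size $1$, and let $\sigma=(b_1\,b_2\,b_3\,b_4\,b_5\,a)^n$. In the variable-size (forced-caching) model every request to $a$ fills the cache and flushes the $b_j$, so all $6n$ requests fault. In your GC instance, keep $b_1,\ldots,b_5$ resident forever and serve each $a$-cluster with two loads of five \element{}s each ($\{a_1,\ldots,a_5\}$ at the fault on $a_1$, then $\{a_6,\ldots,a_{10}\}$ at the fault on $a_6$, evicting the first five): this costs $5+2n$ faults, strictly less than $6n$ for $n\geq 2$. (Even if the source model permits bypassing, its optimum is $5+n$, which still does not match.) The paper's fix is to access each active-set \element{} $s_i$ times in round-robin order within a cluster, so that serving a cluster while holding fewer than $s_i$ of its \element{}s incurs on the order of $s_i$ extra faults --- more than the at most $s_i$ units of space such frugality can free --- which forces an optimal GC policy to load and evict active sets atomically and restores the equivalence. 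Your NP-membership argument and the forward direction (variable-size solution $\Rightarrow$ equal-cost GC solution) are fine and would carry over to the repeated-access construction.
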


\begin{proof}
Our proof relies on a reduction from variable size caching, which is known 
to be NP-complete~\cite{chrobak2012caching}. 
We begin by showing how to create \element{}s for the \problem{} and 
then assign them to blocks. 
We then use these blocks to generate a trace where the cost paid by the 
\opt{} is the same as the optimal cost for the variable-size caching problem. 

The first step of the reduction is to scale the variable-size caching problem 
to have integral \element{} sizes. 
This can be done by multiplying the size of each \element{} and the cache 
size by the same value (assuming the sizes are rational numbers). 
After the sizes are all integral, the \element{}s of the \problem{} can be 
created and partitioned. 
For each \element{} in the variable-size caching problem, we create one 
\block{} in \prob{}. 
The maximum size of these \block{}s can be chosen to be any value greater than 
or equal to the size of the largest \element{} in the scaled variable-size 
problem. 
For each of these \block{}s we will use only the first $z$ \element{}s, where 
$z$ is the size of the corresponding \element{} in the scaled problem. 
We refer to these as the \emph{active set} for that \block{}. 

The idea for trace generation is to create a trace for \prob{} that has 
accesses to the same amount of cache space as in the variable-size problem. 
For each access in the variable size trace, the \prob{} trace replaces it with 
consecutive accesses to the active set of the corresponding \block{}. 
Each \element{} is accessed a number of times equal to the number of 
\element{}s in the active set, in round robin order. 
The ordering of the variable-size trace is maintained, so that the ordering of 
the \block{}s in the \prob{} trace is the same as the ordering of the 
\element{}s in the variable-size trace. 
The cache size is set to be the same as that of the scaled variable-size 
problem. 

We are left to show that the optimal cost of the \problem{} that we generate 
is the same as the optimal cost of the variable-size caching problem. 
First, we show that scaling sizes does not affect the result. 
Since the cache size was scaled by the same factor as the \element{}s, the 
fraction of the cache space that each \element{} takes remains unchanged. 

Beyond this, we show that there is an optimal solution for the generated 
\prob{} instance that loads and evicts the entire active set of a \block{} 
at the same time. 
To do this, we rely on the fact that an optimal solution can load all \element{}s 
from the active set that are not in cache for unit cost. 
This means that any consecutive series of requests to a single \block{} can be 
served for that unit cost. 
However, unless the cache contains the entire active set, it must 
pay at least unit cost to serve the \block{}. 

Since the active sets corresponding to different \element{}s in the 
variable size problem{} are in distinct \block{}s, there is no way to load an 
active \element{} without paying unit cost for that \block{}. 
When combined with the fact that the active \element{}s remain unchanged for 
each \block{}, we can show that evicting a single active \element{} 
increases the cost paid by the cache the same as evicting all active 
\element{}s for that \block{}. 

If we assume the cache evicts all active \element{}s for a \block{} at the 
same time, then upon the first consecutive access to a \block{}, either the 
entire active set will be in cache or none of it will be. 
If the set is in cache, no loads are required. 
Otherwise, loading less than the entire set will cause
the need for additional loads. 
In particular, the maximum amount of cache space used for the active set 
multiplied by the number of loads must exceed the active set. 

We use the repeated accesses to the active set to show that the benefits of 
such additional loads cannot outweigh the drawbacks. 
Since any \element{} can be loaded for unit cost, the benefit of using less 
cache space for one set of consecutive accesses to the active set cannot exceed 
the amount of cache space reserved. 
Due to the repeated nature of the accesses, loading less than the entire active 
set will result in at least as many additional loads as the size of the active set. 
This means that loading the entire active set upon the first miss is optimal. 

Since we have shown that an optimal solution is to load and evict entire active 
sets at a time, any access that immediately follows an access to the same 
active set will automatically be a hit. 
Putting these results together, the optimal solution to the generated trace is 
the same as the optimal solution to the original trace. 
\end{proof}

\begin{figure}
\includegraphics[width=\linewidth]{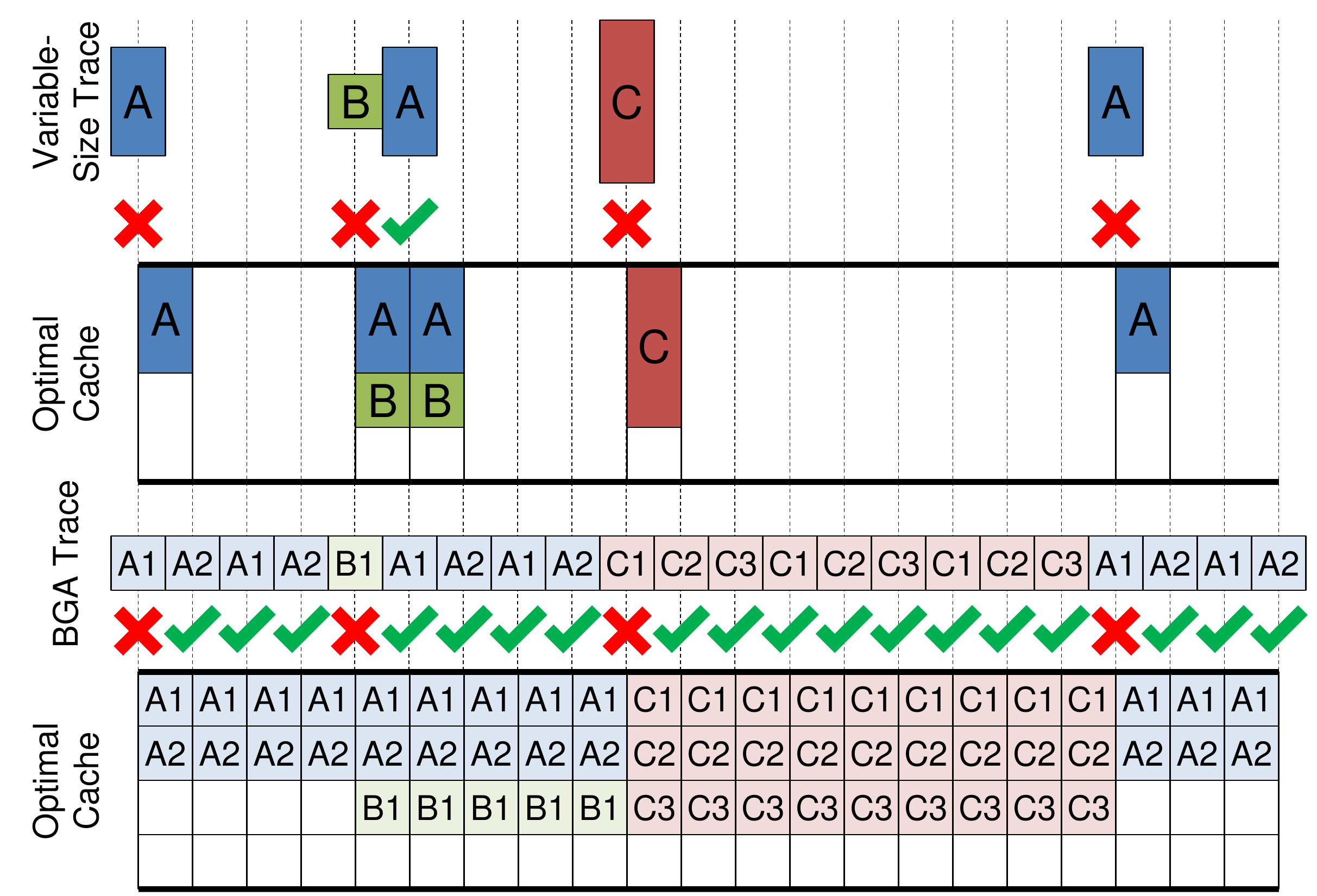}
\vspace{-0.15in}
\caption{
A diagram showing how to transform an instance of variable-size caching 
into an instance of the \prob{} with the same cost. 
Notice that storing \element{} A1 with \block{} C would not decrease the 
number of misses, since A2 would still suffer a miss. 
The repeated accesses force the optimal solution to load the entire \block{} 
in order to minimize misses. 
}
\label{fig:reduction-diagram}
\end{figure}

Figure~\ref{fig:reduction-diagram} shows an example of the reduction. 
We simulate variable size \element{}s through the use of multiple 
\element{}s from the same \block{} accessed consecutively. 
By repeating these access sequences, we force the optimal solution to load 
all \element{}s that are used in the \block{}. 
The optimal solution to the input instance can easily be translated into an 
optimal solution to the generated instance.

\section{Competitive Lower Bound}
\label{sec:bacs-comp-lb}

We next show how to adapt prior techniques for
competitive ratios to the \problem{}. 
We will start by providing a lower bound for \elempol{}s and \blockpol{}s, 
and then use the insights we gain to devise a more general lower bound. 

\subsection{\elempol{}s} \label{sec:item-bound}
We start by examining how \elempol{}s perform in the \problem{}. 
These policies are widespread and well studied, so they serve as a logical 
starting point for our investigation. 

The traditional lower bound for \elempol{}s comes from Sleator and 
Tarjan~\cite{sleator1985amortized}, and is defined as follows.
We define $k$ to be the size of the online cache and $h$ to be the size of 
the optimal cache. 
\begin{enumerate}
\item Assume both the optimal and online caches are full. 
\item Access $k - h + 1$ \element{}s that have not been seen before.
\item Create a set of \element{}s containing the \element{}s in the optimal 
cache during step one and the \element{}s accessed during step two. 
This set contains $k + 1$ \element{}s. 
\item Access the \element{} from the set that is not in the online cache. 
Repeat this process a total of $h-1$ times. 
\item To generate a longer trace, return to step one. 
Note that the assumption will be satisfied. 
\end{enumerate}
In these traces the online policy never hits, as the \element{}s in step two 
are newly accessed and the \element{}s in step four are chosen to be outside 
of that cache. 
The optimal policy also misses on every access in step two. 
Since it knows the future accesses, it can perform evictions so as to 
store each \element{} that will be accessed in step four, allowing it to achieve 
hits on all of these accesses. 

The \problem{} problem modifies traditional caching by 
introducing spatial locality. 
This means that the optimal policy can miss on only one out 
of every \bsize{} accesses in step two if the \element{}s are chosen such that 
an entire \block{} is accessed. 
The cost of this modification is that the optimal cache uses \bsize{} space 
rather than unit space in step two, 
and thus step four must be shortened to  $h-\bsizemath{}$ accessses. 
This provides the following result:

\begin{theorem}\label{thm:lb-elempol}
Any \elempol{} has a competitive ratio of at least 
$\bsizemath{} (k-\bsizemath{}+1) / (k-h+1)$ 
where $k$ is the size of the online cache and 
$h$ is the size of the optimal cache.
\end{theorem}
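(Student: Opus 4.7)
The plan is to adapt the five-step Sleator-Tarjan construction recalled just before the theorem, modifying step 2 to exploit spatial locality and shortening step 4 to respect the \opt{}'s cache budget. For clean arithmetic I would assume $\bsizemath{}$ divides $k-h+1$; fractional terms can be absorbed into floors without changing the asymptotic bound.

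First, in step 2, I would have the adversary access $c=(k-h+1)/\bsizemath{}$ never-before-seen \block{}s, issuing all $\bsizemath{}$ requests to a single \block{} in immediate succession before moving to the next. An \elempol{}, which by definition loads only the single requested \element{} on a miss, suffers a miss on every one of these $k-h+1$ accesses. The \opt{}, however, can load the entire \block{} for unit cost on the first access, making the remaining $\bsizemath{}-1$ intra-\block{} accesses hits; so the \opt{} incurs exactly $c$ misses in step 2.

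The key new ingredient is calibrating step 4. While the \opt{} is servicing step 2, at least $\bsizemath{}$ of its $h$ cache slots are occupied by the currently-loaded \block{}, leaving only $h-\bsizemath{}$ slots in which to pre-stage \element{}s that step 4 will later access as hits. I would therefore shorten step 4 from $h-1$ to $h-\bsizemath{}$ iterations and replay the Sleator-Tarjan counting argument: define the set of $k+1$ candidates as the $h$ \element{}s in the \opt{}'s cache during step 1 plus the $k-h+1$ new step-2 \element{}s, and, since the online cache holds only $k$ items, at least one set element is always outside it, so the adversary can force a fresh miss each iteration. The \opt{}, knowing the trace in advance, preloads those $h-\bsizemath{}$ step-4 \element{}s during the preceding phase, keeps them resident as it cycles the step-2 \block{}s through the remaining $\bsizemath{}$ slots, and hits on every step-4 access. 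Maintaining this preload-and-rotate invariant across phase boundaries is where I expect the main technical work to sit; I would handle it by having the \opt{} always choose its step-1 residents from within the set of $k+1$ so the induction closes between consecutive phases.

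Summing per-phase costs gives $(k-h+1)+(h-\bsizemath{})=k-\bsizemath{}+1$ misses for the \elempol{} against $(k-h+1)/\bsizemath{}$ for the \opt{}, and repeating the construction indefinitely (the role of step 5) drives the cost ratio to $\bsizemath{}(k-\bsizemath{}+1)/(k-h+1)$, matching the claimed lower bound.
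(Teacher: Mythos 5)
Your proposal is correct and follows essentially the same route as the paper's proof: modify step 2 to access entire never-before-seen \block{}s (so the \opt{} misses only $(k-h+1)/\bsizemath{}$ times while the \elempol{} misses on all $k-h+1$ accesses), shorten step 4 to $h-\bsizemath{}$ iterations to account for the $\bsizemath{}$ slots the \opt{} devotes to the rotating \block{}, and take the ratio $(k-\bsizemath{}+1)\big/\bigl((k-h+1)/\bsizemath{}\bigr)$. Your explicit attention to closing the induction across phase boundaries is a detail the paper leaves implicit, but it does not change the argument.
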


\begin{proof}
  Create a trace according to the following procedure.
  Steps (1), (3), and (5) are the same as above for traditional caching. The other two steps are modified slightly:
\begin{itemize}
\item[(2)] Choose a \block{} that has not been seen before. 
Access each \element{} from that \block{}. 
Repeat this process until $k-h+1$ \element{}s have been accessed 
in this step. 
\item[(4)] Access the \element{} from the set that is not in the online cache. 
Repeat this process a total of $h-\bsizemath{}$ times. 
\end{itemize}

Since the online policy does not load any \element{} that is not accessed, it 
will miss on each access in step two. 
The accesses in step four are chosen to ensure that the online policy misses on 
each. 

The optimal policy will load each \block{} on its first access in step two, 
resulting in $(k-h+1) / \bsizemath{}$ misses. 
It can use its remaining $h-\bsizemath{}$ space to store the \element{}s 
accessed in step four, hitting on each. 

Taking the ratio of these misses provides the bound. 
\end{proof}

This competitive ratio is nearly a multiplicative \bsize{} factor worse than 
traditional lower bounds under the assumption that $k \gg \bsizemath{}$. 
Since the behavior of the policies
has not changed, this shows the increased power of the optimal algorithm in this 
model. 

\subsection{\blockpol{}s}\label{sec:block-pol-lb}

If \elempol{}s do not perform well in the \problem{}, perhaps \blockpol{}s will. 
They are naturally suited to handle the access sequences  
\elempol{}s suffered on, and may therefore offer a better realization of the 
potential performance improvements. 

This intuition proves accurate for the trace design scheme described in
Section~\ref{sec:item-bound},
Online \blockpol{}s will be able to hit on all subsequent accesses to a \block{} 
in step two, providing the same number of misses as the optimal policy on that 
step. 
As long as the number of accesses in the second step dominates the number in 
the third, these policies will perform well. 

The problem with \blockpol{}s is that when only a few \element{}s of a block 
are accessed, the remaining \element{}s will pollute the cache, reducing the 
available space to serve accessed \element{}s. 
In particular, when only one \element{} from a \block{} is used at a time, the 
cache is effectively \bsize{} times smaller. 
We can apply this insight along with the traditional bound to provide a bound 
for \blockpol{}s. 

\begin{theorem}\label{thm:lb-blockpol}
Any \blockpol{} has a competitive ratio of at least 
$k / (k-\bsizemath{}(h-1))$
where $k$ is the size of the online cache and 
$h$ is the size of the optimal cache.
\end{theorem}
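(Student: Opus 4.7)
The plan is to adapt the Sleator--Tarjan construction from Theorem~\ref{thm:lb-elempol}, but turned inside out: there the optimal cache benefited from spatial locality because entire \block{}s were accessed; here I want to punish the \blockpol{} by forcing it to drag along $\bsizemath{}-1$ unused \element{}s with every load, so its effective item-level capacity collapses from $k$ to $k/\bsizemath{}$, while the optimal \elempol{}-style offline solution can use all $h$ slots at item granularity.

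Concretely, I would build a trace over \element{}s drawn one-per-\block{}, so that each requested \element{} lives in a distinct \block{} and the other $\bsizemath{}-1$ \element{}s in that \block{} are never touched. Because a \blockpol{} always loads and evicts in whole-\block{} units, the $k$-sized online cache can simultaneously hold at most $k/\bsizemath{}$ of the accessed \element{}s; by contrast, an offline \elempol{}-style policy (which is a legal strategy in \prob{}, since one may load any subset of the \block{} containing the request) holds $h$ of them. Now I run the classical Sleator--Tarjan phase structure but at this ``effective'' granularity: starting from a full cache, access $k/\bsizemath{} - h + 1$ fresh single-item-from-a-\block{} \element{}s, then form the set of $k/\bsizemath{} + 1$ candidate \element{}s consisting of the $h$ the optimal currently holds plus these fresh ones, and issue $h-1$ requests, each to whichever candidate is absent from the online \blockpol{}.

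The accounting then mirrors the standard argument. Every access in phase~2 misses in the online policy because the \element{} is new; every access in phase~4 misses by choice of the adversary. This gives the \blockpol{} exactly $(k/\bsizemath{} - h + 1) + (h - 1) = k/\bsizemath{}$ misses per phase. The optimal, knowing the future, retains exactly the $h-1$ \element{}s that will be requested in phase~4, so it misses only on the $k/\bsizemath{} - h + 1$ fresh requests of phase~2. Taking the ratio and simplifying yields
\[
\frac{k/\bsizemath{}}{k/\bsizemath{} - h + 1} \;=\; \frac{k}{k - \bsizemath{}(h-1)},
\]
and repeating the phase produces arbitrarily long traces, giving the claimed competitive lower bound.

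The main obstacle I expect is handling the fact that $k/\bsizemath{}$ need not be an integer and that a \blockpol{} may legitimately hold a partial \block{} at the boundary of an eviction; I would address this by working with $\lfloor k/\bsizemath{} \rfloor$ effective slots and appealing to the assumption $k \gg \bsizemath{}$ so that the floor loses only lower-order terms. A secondary subtlety is verifying that the adversary can always find a candidate \element{} not in the \blockpol{}'s cache during phase~4; this follows from a pigeonhole argument on the $k/\bsizemath{} + 1$ candidates against the online policy's at-most-$k/\bsizemath{}$ cacheable accessed \element{}s, exactly as in the classical proof.
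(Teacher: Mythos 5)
Your construction is the same as the paper's: one requested \element{} per \block{} so the \blockpol{}'s effective capacity drops to $k/\bsizemath{}$, a Sleator--Tarjan phase of $k/\bsizemath{}-h+1$ fresh requests followed by $h-1$ adversarial requests drawn from a candidate set of size $k/\bsizemath{}+1$, and the same miss accounting giving the ratio $\frac{k/\bsizemath{}}{k/\bsizemath{}-h+1}=\frac{k}{k-\bsizemath{}(h-1)}$. The paper handles the integrality issue with $\lceil k/\bsizemath{}\rceil$ and then assumes $\bsizemath{}$ divides $k$, which matches your floor-plus-$k\gg\bsizemath{}$ remark; the argument is correct and essentially identical.
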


\begin{proof}
Create a trace according to the following procedure:
\begin{enumerate}
\item Assume both the optimal and online caches are full and 
that each \element{} in the optimal cache is from a different \block{}. 
\item Access one \element{} from each of 
$\lceil{} k / \bsizemath{} \rceil{} - h + 1$ 
\block{}s that have not yet been accessed. 
\item Create a set of \element{}s containing the \element{}s in the optimal 
cache during step one and the \element{}s accessed during step two. 
This set contains $\lceil{} k / \bsizemath{} \rceil{} + 1$ \element{}s, 
each of which are from a different \block{}. 
\item Access the \element{} from the set that is not in the online cache. 
Repeat this process a total of $h-1$ times. 
\item To generate a longer trace, return to step one. 
Note that the assumption will be satisfied. 
\end{enumerate}

The online policy has not seen any of the \block{}s accessed 
in step two, so it will miss on each access. 
Because it loads and evicts at \block{} granularity, it will store exactly 
$\lceil{} k/\bsizemath{} \rceil{}$ \block{}s at a time. 
This means that step four can always choose one \element{} that is not in 
the online cache, resulting in misses for all accesses in that step. 

The optimal policy will miss on each access in step two 
for $\lceil{} k / \bsizemath{} \rceil{} - h + 1$ misses. 
Since it knows what will be accessed in step four, it can 
choose to keep those \element{}s and hit on each such access. 

With the assumption that \bsize{} evenly divides $k$, 
taking the ratio of the misses and simplifying results in the target bound. 
\end{proof}

This result shows that \blockpol{}s perform very poorly on traces that do not 
take advantage of the spatial locality that is available in the \problem{}. 
In particular, they suffer a performance penalty where the effective cache size 
is reduced by a factor equal to the ratio between the 
size of the \block{} and the average number of \element{}s used per \block{}.
This means that the competitive ratio of such policies is infinite unless they 
have at least \bsize{} times as much space as the optimal policy to which they are being 
compared. 

\subsection{Generalizing the Lower Bound}

Building on the above discussion, we now provide a more general lower bound for policies beyond
\elempol{}s and \blockpol{}s. 

Like the worst-case traces for \elempol{}s and \blockpol{}s, we will first 
access new \element{}s until the online cache can no longer store the entire 
active set, and then repeatedly access whichever \element{} the cache chooses 
not to store. 
Unlike the previous bounds, we cannot make assumptions about the granularity 
of loads or evictions. 

When adding new \element{}s to the active set, we take advantage of 
spatial locality
to allow the \opt{} to outperform the online cache. 
For any \block{}, the worst-case trace can continue to access an \element{} from 
that \block{} that is not in the online cache until no such \element{} exists. 
The online cache will miss on each access, while the \opt{} can load each 
accessed \element{} on the first access to the \block{}. 
Using this insight, we categorize policies using a parameter \fulltime{}, 
which is the number of distinct consecutive accesses to a \block{} 
before the policy loads the entire \block{}. 

Our trace construction 
follows what we use for \elempol{}s, 
except that the online policy will incur \fulltime{} misses 
on each block in step two and the optimal policy will need to use \fulltime{} 
space to service these requests, leaving only $h-\fulltimemath{}$ space 
available for step three. 

\begin{theorem}\label{thm:lb-general}
For any deterministic replacement policy that requires \fulltime{} consecutive 
distinct accesses to a block to load all of it, the competitive ratio of that 
policy is at least 
$(\fulltimemath{}(k-h+1) + \bsizemath{}(h-\fulltimemath{})) / (k-h+1)$
where $k$ is the size of the online cache and 
$h$ is the size of the optimal cache.
\end{theorem}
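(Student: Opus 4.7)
The plan is to adapt the construction from Theorem~\ref{thm:lb-elempol} by replacing the role of $B$ with the policy-dependent parameter $a$. Following the five-step Sleator-Tarjan template, steps one, three, and five remain unchanged, while steps two and four are tailored to $a$.

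In step two, the adversary accesses $a$ distinct items from each of a sequence of previously unseen blocks. By the defining property of $a$, the online policy misses on every such access (it has not yet triggered a full-block load), while the optimal cache serves each block with a single miss and consumes $a$ slots per block. The number of blocks is chosen so that the online cache is displaced from enough of its step-one contents to set up step four.

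In step four, the adversary repeatedly accesses an item from the active set (the optimal cache's original contents together with the items accessed in step two) that is absent from the online cache; such an item always exists because the set has $k+1$ members and the online cache holds only $k$. The key design is to sequence step four into $h-a$ rotations, each cycling through items drawn from $B$ distinct blocks so that no single block accumulates $a$ distinct accesses within the rotation. This forces the online cache to miss on every step-four access (never triggering the full-block load), yielding $B$ misses per rotation, while the optimal cache, which has $h-a$ free slots after step two, preloads exactly the items that step four will demand and hits on each. Tallying per round gives $a(k-h+1) + B(h-a)$ online misses against $(k-h+1)$ optimal misses, and the claimed ratio follows.

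The hard part is the step-four construction: the adversary must find, for each access in a rotation, an item in the active set that is outside the online cache \emph{and} whose block has not yet been touched in that rotation, so that no block reaches $a$ distinct accesses. This requires a careful pigeonhole argument together with the assumption $k \gg B$ built into the model, and one must also verify that the optimal cache can indeed preload all of the specific items that step four will request under the online policy's responses.
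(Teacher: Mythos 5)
Your overall template (a Sleator--Tarjan-style adversary with the policy parameter $\fulltimemath$ replacing the role of $\bsizemath$) is the right one and matches the paper's, and your final ratio is correct, but the specific construction has a quantitative flaw that makes it infeasible. You use $k-h+1$ fresh blocks in step two (giving $\fulltimemath(k-h+1)$ online misses) and then ask step four to produce $\bsizemath(h-\fulltimemath)$ online misses on which the optimal cache hits every time. That second phase cannot happen: each optimal-cache hit in step four requires a distinct item already resident in the optimal cache (a repeated item would also be resident in the online cache, since the online cache must load every item it misses on, so repeats cannot be guaranteed misses against an arbitrary policy such as an LRU-like one). The optimal cache has only $h-\fulltimemath$ slots to spare during step two (it needs $\fulltimemath$ for the block currently being streamed), and at most $h$ slots ever, so it can hold at most $h-\fulltimemath$ preloaded step-four items --- not $\bsizemath(h-\fulltimemath)$. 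Your two phases are each inflated by a factor of $\bsizemath$ relative to a feasible trace, which leaves the ratio formula coincidentally unchanged but breaks the argument.

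The paper's construction fixes exactly this: step two touches only $\lceil (k-h+1)/\bsizemath \rceil$ fresh blocks ($\fulltimemath$ accesses each, so $\fulltimemath(k-h+1)/\bsizemath$ online misses versus $(k-h+1)/\bsizemath$ optimal misses), and the candidate set for step four consists of the optimal cache's step-one contents plus \emph{all} $\bsizemath$ items of each accessed block --- not just the $\fulltimemath$ accessed ones --- which is what makes the set reach size $k+1$ with so few blocks. Step four then has only $h-\fulltimemath$ accesses, which the optimal cache can serve from its $h-\fulltimemath$ spare slots (acquiring block-resident items for free via spatial locality during step two where needed). Dividing $\fulltimemath(k-h+1)/\bsizemath + (h-\fulltimemath)$ by $(k-h+1)/\bsizemath$ gives the stated bound. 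Also, the difficulty you flag --- preventing a block from accumulating $\fulltimemath$ distinct accesses during step four --- is not actually a problem in the correct construction: even if a full-block load is triggered there, the access that triggered it was still a miss, and the pigeonhole on a $(k+1)$-item set versus a $k$-slot cache still yields a missing item for the next access.
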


\begin{proof}
Create a trace according to the following procedure:
\begin{enumerate}
\item Assume both the optimal and online caches are full. 
\item For $\lceil{} (k-h+1) / \bsizemath{} \rceil{}$ 
\block{}s that have not yet been accessed: \\
While there exists an \element{} from that block that the online cache 
has not yet loaded, access that \element{}. 
This will occur \fulltime{} times per \block{}. 
\item Create a set of \element{}s containing the \element{}s in the optimal 
cache during step one and the \element{}s in the \block{}s accessed during 
step two. 
This set contains at least $k + 1$ \element{}s. 
\item Access an \element{} from the set that is not in the online cache. 
Repeat this process a total of $h-\fulltimemath{}$ times. 
\item To generate a longer trace, return to step one. 
Note that the assumption will be satisfied. 
\end{enumerate}

As with our other traces, the online policy will miss on every access, 
causeing $\fulltimemath{} \lceil{} (k-h+1) / \bsizemath{} \rceil{}$ 
misses in step two and $h-\fulltimemath{}$ in step four. 
The optimal policy will load each of the \fulltime{} \element{}s that will be 
accessed in a \block{} on its first access in step two, 
resulting in $\lceil{} (k-h+1) / \bsizemath{} \rceil{}$ misses. 
It can use its remaining $h-\fulltimemath{}$ space to store the \element{}s that will 
be accessed in step four, hitting on each. 

With the assumption that \bsize{} evenly divides $k-h+1$, 
taking the ratio of the misses and simplifying results in the target bound. 
\end{proof}

\subsection{Analysis and Discussion}

Having devised lower bounds for deterministic policies for the 
\problem{}, we can use the insights we have developed to learn more about 
the original problem. 

\paragraph{Designing Policies.}
In order to minimize the lower bound, we need to consider the \fulltime{} 
parameter. 
In Theorem~\ref{thm:lb-general}, the \fulltime{} parameter shows up in one 
positive term and one negative term, both in the numerator. 
When $k-h+1 > \bsizemath{}$, then the positive term dominates, and 
minimizing \fulltime{} also minimizes the competitive ratio. 
Otherwise, the negative term dominates and maximizing \fulltime{} minimizes 
the ratio.
Thus, to achieve the best competitive ratio,
\emph{one should 
load either an entire \block{} or a single \element{}, and nothing in between.}

This result is true even when we allow the \fulltime{} parameter to be 
non-constant. 
To show this, we first consider a single cycle of the trace. 
The online policy suffers misses equal to the sum of its chosen \fulltime{} 
parameters across each \block{} accessed. 
The \opt{} will use space equal to the maximum number of accesses for a single 
\block{} to suffer one miss per \block{}. 
Since this space is forced to store particular \element{}s, it cannot be used 
in step four of the generation, and thus the online policy will suffer that many fewer misses. 
By applying these observations, we see that the resulting ratio is minimized when \fulltime{} is either maximized or minimized. 


Since \element{}s in a \block{} can only be distinguished by
whether they have been accessed or not, it makes sense to 
either want to bring in all of them or none of them after an access. 
However, it is less clear why an approach similar to that of the 
ski-rental problem, where the policy waits for confirmation that additional 
accesses are coming before incurring the expense of loading the block, is not
 correct. 
There appears to be a two-part answer to this question. 
The first part is the fact that the number of possible ``rentals'' is bounded by \bsize{}. 
The second is that that the decision to ``purchase'' additional \element{}s is relatively low cost, in that it can easily be changed by evicting the loaded \element{}s when a different \block{} is accessed. 

The intuition for choosing between 1 and \bsize{} for the \fulltime{} parameter 
can be found in the relative costs due to the types of locality. 
For caches where the online and offline caches are roughly equal size, 
the online cache needs as much space as possible to compete on traditional 
worst-case traces, and using cache space to serve spatial locality is more 
harmful than helpful. 
By contrast, when the online cache is much larger than the offline cache, 
the marginal benefit of the extra lines is small, making them more useful when 
devoted to handling spatial locality. 
In these situations, policies that load the entire block on access perform 
better.

We can apply a similar logic when analyzing eviction. 
The lower bound for \blockpol{}s shows that evicting at the \block{} granularity 
is inefficient. 
We need some way to choose between \element{}s in a \block{} for eviction. 
By considering whether an \element{} has been accessed,
we can differentiate \element{}s into two classes, 
preferring \element{}s that have been accessed over those that have not.

Putting this all together, we see that in order to maximize performance, 
policies should load the entire \block{} on an access 
(unless comparing against an optimal with similar size, 
where they should load individual \element{}s)
to take advantage of spatial locality, but evict \element{}s
individually to give preference to ones that have been accessed over those that have not.


\begin{figure}
\includegraphics[width=1.0\linewidth]{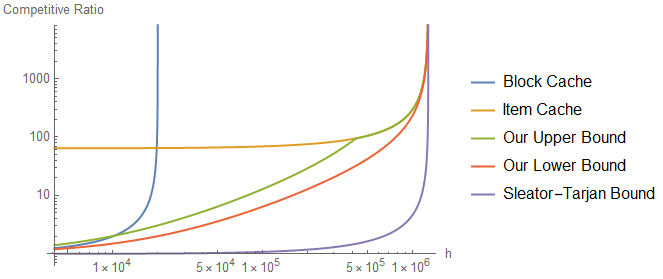}
\caption[Bounds in the \problem{}]{
  Comparing bounds in the \problem{}.
  The $x$-axis is the optimal cache size $h$, and the $y$-axis is the competitive ratio. 
  Online cache size $k = 1.28M$ and \block{} size $\bsizemath{} = 64$.
}
\label{fig:bound-comp-graph}
\end{figure}

\paragraph{The Bound in Context.}
Figure~\ref{fig:bound-comp-graph} plots the competitive ratio bounds, for a fixed online cache size $k=1.28M$ and block size $B=64$. Our resulting lower bound is much greater than 
the Sleator-Tarjan~\cite{sleator1985amortized} bound,
meaning that the gap between online and offline policies is larger in \pr{} Caching
than in traditional caching.
The gap starts at a multiplicative factor of nearly
\bsize{}$\times$ when $k \approx{} h$ 
(since the $\bsizemath{}h$ term dominates), 
and tapers off, hitting $2\times$ when $k \approx{} \bsizemath{}h$.
Table~\ref{tab:int-points} gives three salient points of comparison
for the Sleator-Tarjan bound, our lower bound, and our upper bound (discussed in Section~\ref{sec:bac-comp-ub}):
constant factor augmentation,
the point where the augmentation meets the competitive ratio,
and constant competitive ratio. 
These results show that, compared to traditional caching, 
the introduction of spatial locality increases the gap 
between online and offline policies by $\bsizemath{}\times$, which can be 
spread between the competitive ratio and the augmentation factor.
%
In prior models, the augmentation factor ($k/h$) equals the competitive ratio 
when both are $2$. 
By contrast, in the \problem{}, $k = 2h$ has a competitive ratio of 
$\approx 2 + \bsizemath{}$ and a competitive ratio of $2$ requires 
$k \approx \bsizemath{}h$. 
The meeting point of the augmentation factor and the competitive ratio occurs 
when $k \approx \sqrt{\bsizemath{}}h$. 

\section{Competitive Upper Bound}
\label{sec:bac-comp-ub}

\subsection{Policy Description}\label{sec:ub-policy-desc}

\begin{figure}
\begin{center}
\includegraphics[width=0.5\linewidth]{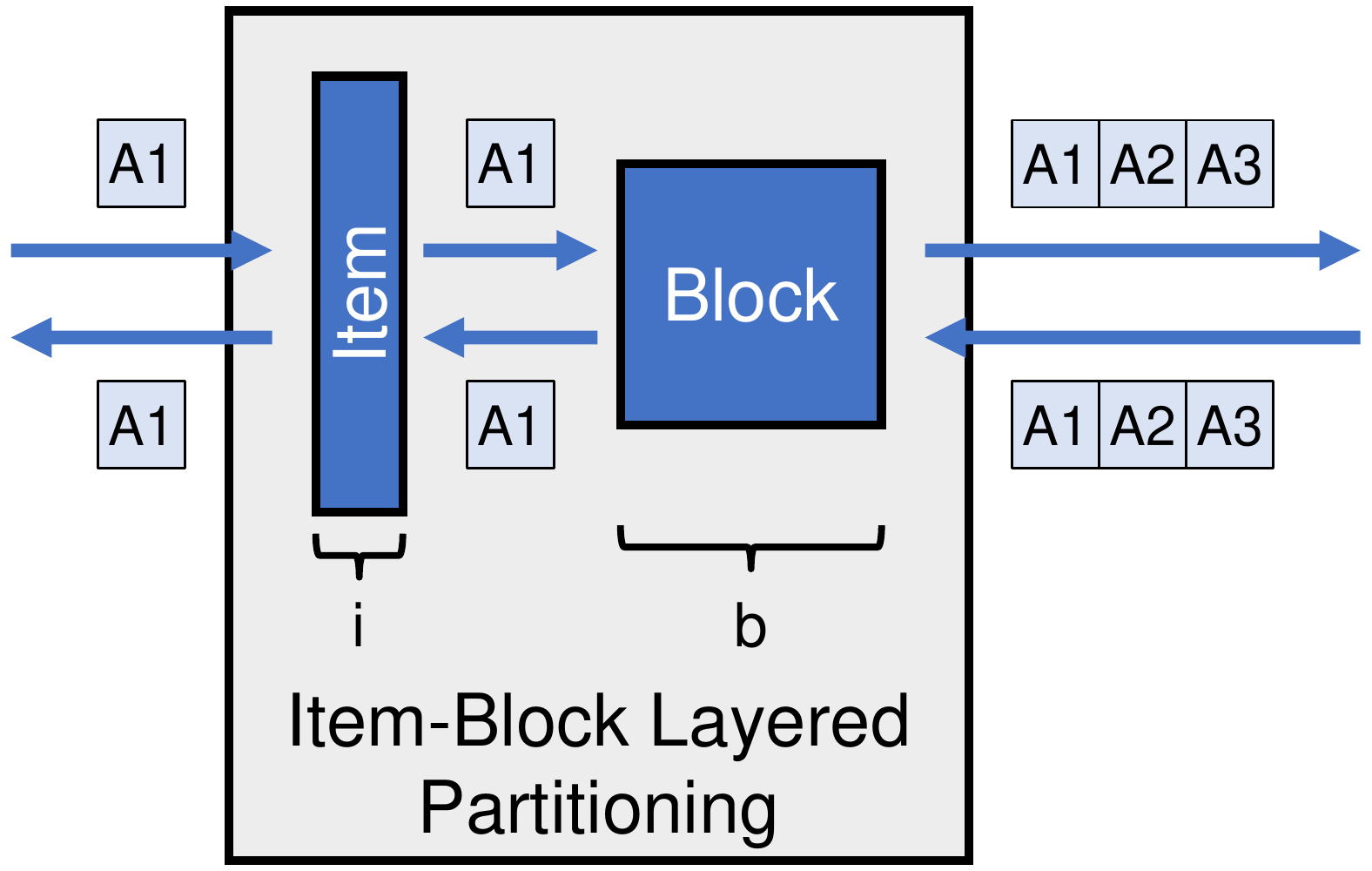}
\end{center}
\caption[A Logical Diagram of \ourpolicy{}]{
  A logical diagram of \ourpolicy{}, consisting of an \elempol{} partition 
  running LRU in front of a \blockpol{} partition running LRU. 
}
\label{fig:pol-diagram}
\end{figure}


Our policy, known as \ourpolicyfull{} (\ourpolicy{}), divides the available 
space into two layers of cache (Figure~\ref{fig:pol-diagram}).
The first layer, which serves each access to the cache, loads only the 
\element{}s that are accessed and evicts using the 
Least-Recently Used (LRU) replacement policy. 
The second layer, which only serves accesses that miss in the first layer, 
also uses the LRU policy for evictions, but loads and evicts at the 
granularity of entire \block{}s at a time.
In other words, \ourpolicy{} organizes the cache as an \elempol{}
backed by a \blockpol{}.
We refer to these layers as the \elempart{} and \blockpart{}, respectively, 
and define their sizes as \epsize{} and \bpsize{}. 

Although this is relatively straightforward to describe and 
implement, it includes some subtle design choices. 
The cache is split into two different layers to handle the two types of 
locality, 
with the \elempart{} handling temporal locality and the \blockpart{} handling 
spatial locality. 
The ordering of the two layers is important to ensure that accesses
with high temporal locality do not reorder \block{}s in the LRU list of the \blockpart{}. 
Allowing such reorderings would cause \block{}s with a small number of
frequently accessed \element{}s to pollute the \blockpart{}, reducing its 
effective space for worst-case traces. 
Note that the \blockpart{} is neither inclusive nor exclusive of the 
\elempart{}. 
If the \blockpart{} were inclusive of the \elempart{},
the \elempart{} would not contribute to the overall hit rate. 
By contrast, an exclusive policy
would avoid duplicating \element{}s, 
but would require a more complicated method of tracking \element{}s
to ensure none are evicted before their lifetimes expire in both partitions. 
Even with our simpler policy, choosing partition sizes is involved. 
We build up to it by analyzing each layer individually and then 
combining the analyses, 
with the results discussed in Section~\ref{sec:ub-app}.

\subsection{The Upper Bound}\label{sec:ub}

In order to prove our upper bound on the competitive ratio of \ourpolicy{}, 
we introduce a new linear programming technique to analyze how the \opt{}
makes use of cache space. 
Using this technique, we will first consider how each layer performs separately 
against adversarial approaches to the type of locality that it targets. 
We will then provide an analysis of the combined problem to ultimately prove 
the upper bound. 

\begin{figure}
\centerline{\includegraphics[width=0.6\linewidth]{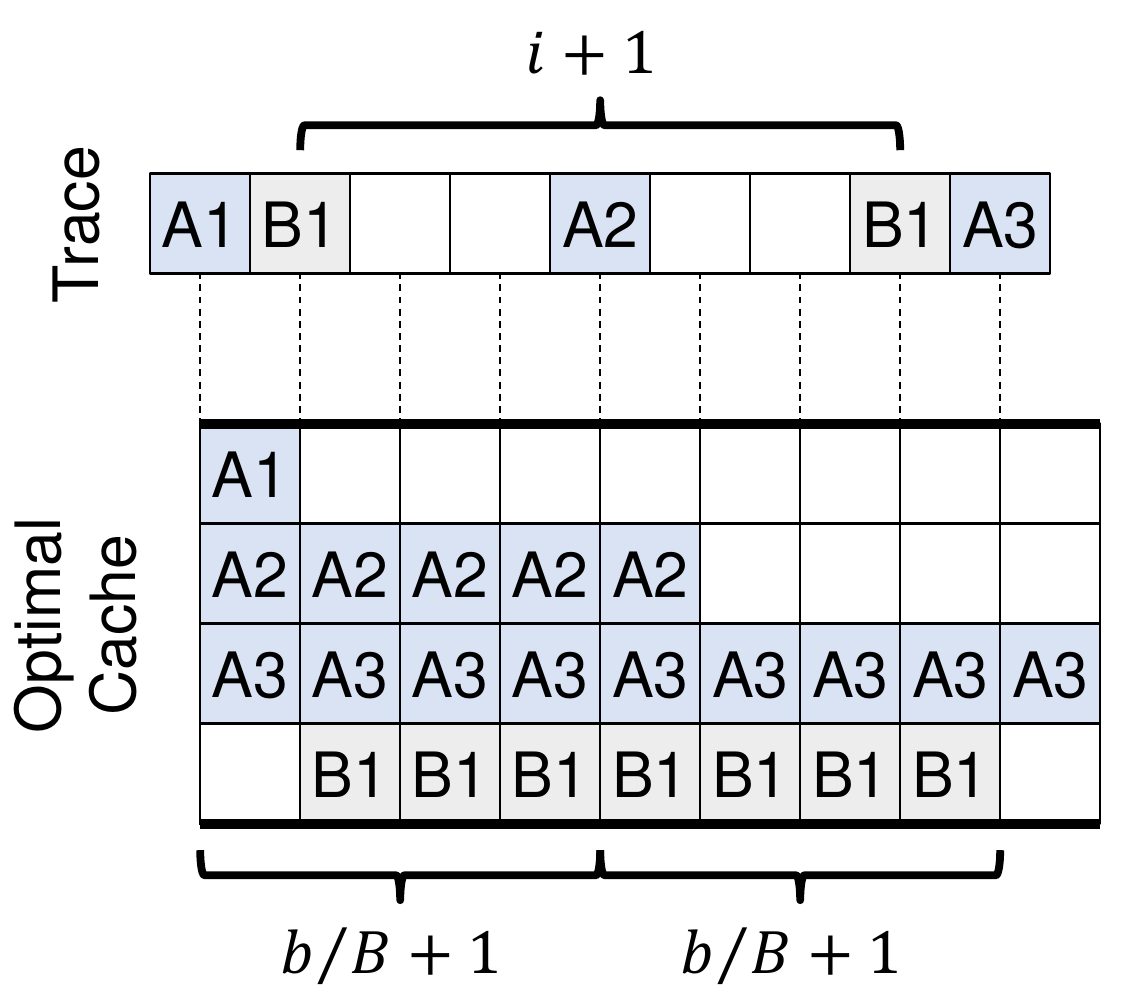}}
\caption{
The pattern of accesses resulting in worst-case traces and the resulting cache 
space used in the optimal cache over time.  
The accesses to A show the worst-case spatial locality, while the accesses to B 
show the worst-case temporal locality. 
}
\label{fig:cache-usage}
\end{figure}

\paragraph{Our Analysis Technique.}
Our analysis visualizes the \opt{}'s performance on a trace as a rectangle, 
with one axis representing the time in units of accesses, 
and the other representing cache space. 
When an \element{} is brought into cache, it take up one unit of space and 
a number of units of time equal to the number of accesses between 
when it is loaded and when it is evicted. 
A visualization of this method can be found in Figure~\ref{fig:cache-usage}. 

In this method, we assume that each access is chosen so as to cause the online 
policy that we compare against to miss.%
\footnote{This assumption results in the worst-case scenario unless accesses to a 
small number of \element{}s can pollute the cache. 
We limit the \blockpart{} to serve only accesses that miss in the \elempart{} 
to prevent such pollution.}
Therefore, if we choose a window of time that accurately captures the average 
long-term behavior of the trace, then the miss ratio is the number of accesses 
in the window $w$ divided by the difference between that number and the number 
of hits $o$ the \opt{} can achieve in the window ($w / (w - o))$. 
We will normalize our result to a window of unit size and consider the hit rate of the \opt{}. 

There are two constraints that limit the number of hits the \opt{} can achieve. 
The first constraint is that of cache usage.
This means that the total number of units of space used by the \opt{} cannot 
exceed the area of the rectangle. 
To reduce clutter, we will ignore the need to leave one unit of cache space 
available for accesses that the \opt{} misses on in this analysis.%
\footnote{The effect of this change on the analysis is limited to replacing 
$h$ with $h-1$ in some places.}
The second constraint is that of accesses. 
This means that the number of accesses that occur cannot exceed the number of 
accesses in the window. 
We will make these constraints more concrete in the analyses below. 

Both of these constraints are necessary to fully specify the problem, but they 
still provide the \opt{} degrees of freedom that are not available in the 
original \pr{} caching problem. 
The constraints limit the total amount of the resource (cache usage or 
accesses) used, but not where these resources are located. 
This allows the \opt{} to make use of solutions that are not viable in the 
original caching problem by having multiple accesses occur at the same time or 
overdrawing cache space at one time in exchange for underutilizing it at a different time. 
Since this looseness empowers the \opt{}, it can only hurt the resulting bounds. 

\paragraph{Temporal Locality.}
We begin our analysis by comparing how the \elempart{} and the \opt{} perform 
on adversarial temporal locality. 
More specifically, we ignore any hits caused by spatial locality. 
In this setting, an access can be a miss for the \elempart{} and a hit for the 
\opt{} if and only if there have been at least \epsize{} distinct \element{}s 
accessed since the \element{} was last accessed. 
This means that any such hit requires at least \epsize{} units of cache 
space. 
Each time this occurs, the access that hits is constrained to be to the same 
\element{} as the access that caused the load. 
The accesses to \block{} B in Figure~\ref{fig:cache-usage} illustrate this pattern. 

Turning these constraints into a linear program and solving it provides an 
upper bound for the competitive ratio of the \elempart{} on temporal locality. 

\begin{theorem}
\label{thm:ub-ep}
When considering hits due to only temporal locality, 
the \elempart{} of \ourpolicy{} has a competitive ratio upper bounded by 
$\frac{\epsizemath{}}{\epsizemath - h}$
where \epsize{} is the size of the \elempart{} cache and 
$h$ is the size of the \opt{}.
\end{theorem}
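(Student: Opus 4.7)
The plan is to instantiate the rectangle-visualization technique described just above for the special case of pure temporal locality. Fix a window of $w$ accesses (later normalized to unit length) and adopt the worst-case convention that every access misses in the \elempart{}. Represent the \opt{}'s behavior as a packing inside a rectangle of height $h$ (the \opt{} cache size) and width $w$, where each resident \element{} occupies a horizontal strip of unit height whose width equals its residency time. Since we are ignoring hits due to spatial locality, every hit the \opt{} scores corresponds to a maximal residency strip that begins and ends at accesses to the same \element{}.

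The combinatorial core is the claim that every such hit consumes at least \epsizemath{} units of area. Suppose the \opt{} hits on an \element{} at time $t_2$ whose previous reference was at time $t_1$; because the \elempart{} runs LRU over \epsizemath{} slots and missed at $t_2$, this \element{} must have been evicted between $t_1$ and $t_2$, which requires at least \epsizemath{} distinct intervening \element{} accesses. Hence the \opt{} kept this \element{} resident for a span of at least \epsizemath{} accesses, contributing at least \epsizemath{} units of area to the rectangle for each temporal-locality hit.

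Packaging this into a linear program whose objective is to maximize the number of \opt{} hits $o$, subject to the single constraint that the total residency area used by hit-serving strips is at most $hw$, immediately yields $o \le hw/\epsizemath{}$. Since the online policy misses on every one of the $w$ accesses, the competitive ratio of the \elempart{} against the \opt{} restricted to temporal locality is at most $w/(w - o) \le w/(w - hw/\epsizemath{}) = \epsizemath{}/(\epsizemath{} - h)$, which is the stated bound once the window is normalized to unit length.

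The main obstacle I expect is justifying that the LP relaxation is a valid upper bound on the \opt{}'s hit count: the LP ignores integrality and the requirement that residency intervals correspond to a feasible sequence of loads and evictions. The clean way to handle this is to observe that any legitimate \opt{} schedule is itself a feasible LP solution (its residencies fit inside area $hw$ and each temporal-locality hit contributes at least \epsizemath{} to the area), so the LP optimum can only overestimate, never underestimate, the true \opt{}. A secondary subtlety is the loss of one cache slot needed to hold the currently-missing \element{}; this is absorbed by the footnoted convention and, if tracked carefully, merely replaces $h$ with $h-1$ throughout the final expression.
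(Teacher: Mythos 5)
Your proposal is correct and follows essentially the same route as the paper: the area/rectangle argument with the key observation that each temporal-locality hit missed by the LRU \elempart{} forces at least \epsizemath{} units of cache-space--time in the \opt{}, packaged as the LP constraint $h \geq r\epsizemath{}$ after normalizing the window. Your explicit justification of the LP relaxation and the $h$ versus $h-1$ caveat match the paper's footnoted conventions.
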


\begin{proof}
We define $r$ to be the fraction of accesses that the \opt{} hits on due to 
temporal locality. 
The competitive ratio is $\frac{1}{1-r}$. 
Since there is $h$ cache space in the rectangle and each hit requires \epsize{} 
cache space, the cache space constraint is $h \geq{} r \epsizemath{}$.
Since each hit forces one access to a particular \element{}, 
the accesses constraint says that the number of 
hits must be less than the number of accesses, ie $1 \geq r$. 
Although in this problem, this constraint is loose, it becomes critical for 
later versions that include spatial locality. 

This results in the following linear program:
$$\text{Maximize:\;} \frac{1}{1-r}$$
$$\text{subject to:}$$
$$h \geq{} r \epsizemath{}$$
$$1 \geq r$$
$$\text{where } r \text{ is the free variable.}$$

Solving this linear program provides the desired result. 
\end{proof}

This result matches the upper bound from 
Sleator and Tarjan~\cite{sleator1985amortized} for traditional LRU caches.%
\footnote{The lack of a negative one in the denominator is due to the issue 
of space used by misses as discussed earlier.}
Since we are focusing only on temporal locality, this behavior is to be expected, 
as the \elempart{} behaves exactly like an LRU cache of size \epsize{}. 

\paragraph{Spatial Locality.}
We next consider how the \blockpart{} compares to the \opt{} in the face of 
adversarial spatial locality. 
Similar to the analysis above, we will ignore any hits not due to our chosen 
form of locality. 
However, spatial locality introduces several important variations. 

For spatial locality, hits to an \element{} cannot be caused by previous 
accesses to that \element{}, only misses to a different \element{} in their 
\block{} that causes the original \element{} to be loaded. 
This means that each \element{} can cause at most one hit per time that it is 
loaded. 
Furthermore, since the number of \element{}s loaded at a time is upper bounded 
by the block size and one of them is the \element{} that was just missed on, 
the maximum number of hits that can be caused by each miss is at most 
$\bsizemath{} - 1$. 

We now consider what adversarial spatial locality looks like for the 
\blockpart{}. 
In order for the \blockpart{} to miss on an access, there must have been 
$\bpsizemath{} / \bsizemath{}$ distinct other \block{}s accessed since the 
last access to that \block{}. 
This means that in order for the \opt{} to achieve multiple hits from the same 
load operation, each additional \element{} loaded must be stored in cache for 
$\bpsizemath{} / \bsizemath + 1{}$ accesses more than the previous \element{}. 
This results in a triangle-like cache usage pattern, as shown for the 
A \block{} in Figure~\ref{fig:cache-usage}. 

The dimensions of this pattern provide an interesting set of tradeoffs.
Since the \opt{} must miss at least once in order to load \element{}s to hit 
on, the competitive ratio is upper bounded by the number of \element{}s 
loaded at a time. 
However, the cache usage of each \element{} increases with the number of 
\element{}s loaded. 

The design of the linear program is based on the 
number of \element{}s $t$ that \opt{} chooses to load on each miss 
and the number of misses $s$ that cause the \opt{} to perform loads. 
The \opt{} achieves $s (t-1)$ hits, resulting in a competitive ratio of 
$\frac{1}{1-s (t-1)}$. 
Since the total cache usage due to each miss is 
$\sum_{i=0}^{t-1} 1+t(\bpsizemath{}/\bsizemath{} + 1)$, 
the cache usage constraint says that 
$h \geq{} s (\sum_{i=0}^{t-1} 1+t(\bpsizemath{}/\bsizemath{} + 1))$. 
Similarly, the accesses constraint is $1 \geq st$ 
since each load causes the specific miss and each subsequent hit to be 
fixed accesses. 
Solving the resulting linear program provides the resulting bound. 

\begin{theorem}
\label{thm:ub-bp}
When considering hits due to only spatial locality, the 
\blockpart{} of \ourpolicy{} has a competitive ratio upper bounded by 
$\text{min}\Big(\bsizemath{},
(\bpsizemath{} + 2 \bsizemath{} h - \bsizemath{}) / 
(\bpsizemath{} + \bsizemath{}) \Big)$
where \bpsize{} is the size of the \blockpart{} cache and 
$h$ is the size of the \opt{}.
\end{theorem}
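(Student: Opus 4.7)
Since the linear program is already assembled in the text preceding the theorem, the task is to solve it: maximize $s(t-1)$ (equivalently, the competitive ratio $1/(1-s(t-1))$) subject to the cache constraint $h \ge s\bigl[t + (\bpsizemath{}/\bsizemath{}+1)\,t(t-1)/2\bigr]$, the access constraint $st \le 1$, and $1 \le t \le \bsizemath{}$. My plan is to analyze the two branches of the LP depending on which constraint is active, and then argue that the minimum of the two branch optima is the true upper bound.

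When the access constraint is tight, $s = 1/t$ forces $s(t-1) = (t-1)/t$, which is increasing in $t$ and is therefore maximized at $t = \bsizemath{}$, yielding competitive ratio $\bsizemath{}$, the first term of the $\min$. When the cache constraint is tight, $s = h / \bigl[t + (\bpsizemath{}/\bsizemath{}+1)\,t(t-1)/2\bigr]$; setting this equal to the access-tight value $1/t$, i.e., locating the $t$ at which both constraints bind simultaneously, gives after simplification $t^{\ast} = (\bpsizemath{} + 2\bsizemath{}h - \bsizemath{})/(\bpsizemath{} + \bsizemath{})$. At $t = t^{\ast}$ we still have $s(t-1) = (t-1)/t$, so the competitive ratio equals $t^{\ast}$ itself, which is the second term of the $\min$. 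Intuitively, loading more than $t^{\ast}$ elements per miss gives \opt{} more hits per event but forces it to hold each extra element around longer, and the triangular cache profile means that past $t^{\ast}$ the marginal cache cost outweighs the marginal hit.

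The main obstacle, though largely algebraic, is justifying that $t^{\ast}$ actually governs the cache-tight branch. Concretely I would show that for $t > t^{\ast}$ (where the cache constraint strictly bites), the cache envelope $g(t) := h(t-1)/\bigl[t + (\bpsizemath{}/\bsizemath{}+1)\,t(t-1)/2\bigr]$ does not exceed $(t^{\ast}-1)/t^{\ast}$; this reduces to a one-variable calculus check that $g$ is non-increasing past $t^{\ast}$. Combined with the access-branch optimum of $\bsizemath{}$, the final upper bound is the minimum of the two expressions. A minor subtlety is the integrality of $t$, but since the stated bound is a continuous closed form, the integer restriction can be handled as a rounding remark and does not affect the asymptotic shape of the result.
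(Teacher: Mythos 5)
Your proposal matches the paper's proof in substance: both set up the same LP (reading the cache-usage term as the triangular sum $t + (\bpsizemath{}/\bsizemath{}+1)\,t(t-1)/2$), locate the point $t^{\ast}$ where the access and cache constraints bind simultaneously, observe that the ratio there equals $t^{\ast} = (\bpsizemath{}+2\bsizemath{}h-\bsizemath{})/(\bpsizemath{}+\bsizemath{})$, and cap $t$ at $\bsizemath{}$ to obtain the other term of the min. The one step you defer --- that the cache-tight branch $h(t-1)/C(t)$ is non-increasing for $t>t^{\ast}$ --- is also elided in the paper, and it in fact holds only when $h-1 \geq \sqrt{(\bpsizemath{}+\bsizemath{})/(2\bsizemath{})}$ (that branch peaks at $t = 1+\sqrt{2\bsizemath{}/(\bpsizemath{}+\bsizemath{})}$, which can exceed $t^{\ast}$ when $h$ is small), so your write-up is as complete as the paper's but the promised calculus check would need that side condition made explicit.
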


\begin{proof}
We define $s$ to be the fraction of accesses where the \opt{} misses and 
performs loads and $t$ to be the number of \element{}s that the \opt{} chooses 
to load on each miss. 
The \opt{} achieves $s (t-1)$ hits, resulting in a competitive ratio of 
$\frac{1}{1-s (t-1)}$. 
Since the total cache usage due to each miss is 
$\sum_{i=0}^{t-1} 1+t(\bpsizemath{}/\bsizemath{} + 1)$, 
the cache usage constraint says that 
$h \geq{} s (\sum_{i=0}^{t-1} 1+t(\bpsizemath{}/\bsizemath{} + 1))$. 
Similarly, the accesses constraint is $1 \geq st$ 
since each load causes the specific miss and each subsequent hit to be 
fixed accesses. 
This results in the following maximization problem:

$$\text{Maximize:\;} \frac{1}{1-s (t-1)}$$
$$\text{subject to:}$$
$$h \geq{} s \Big(\sum_{i=0}^{t-1} 1+t(\bpsizemath{}/\bsizemath{} + 1)\Big)$$
$$1 \geq st$$
$$\text{where } s \text{ and } t \text{ are the free variables.}$$

Solving this maximization problem provides the second term in the theorem. 
The first term comes from the fact that $t$ cannot exceed \bsize{} combined 
with the second constraint in the linear program. 
\end{proof}

\paragraph{Combining Temporal and Spatial Locality.}
We now show how to combine the two methods of achieving hits to obtain 
an upper bound for the entirety of \ourpolicy{} for general traces. 
Since \ourpolicy{} hits on an access if either of its partitions hits on that 
access, the restrictions on an access in order for it to be a miss for 
\ourpolicy{} must be stricter. 
This allows us to formulate a linear program for the entire policy by combining 
the hits and the constraints of the previous two versions. 

The resulting linear program is too complex to solve directly. 
To deal with this, we modify the spatial locality problem to take the number 
of hits due to temporal locality as an input. 
We then use the result of this problem to choose the number of temporal 
locality hits that maximizes the competitive ratio. 
The result of this is shown below in Theorem~\ref{thm:iblp-ub}.

\begin{theorem}\label{thm:iblp-ub}
The competitive ratio of \ourpolicy{} is upper bounded by:
$$\begin{cases}
  \frac{(\bpsizemath{} + \bsizemath{} (2 i - 1))^2}{8 \bsizemath{} (
      \bsizemath{} + \bpsizemath{}) (\epsizemath{} - h)} & 
      \epsizemath{} \leq{} \frac{2\bsizemath{}\bpsizemath - \bpsizemath{} + 
      2\bsizemath{}^2 + \bsizemath{}}{2\bsizemath{}} \\
  \frac{2\bsizemath{}\epsizemath{} - \bsizemath{}\bpsizemath{} + 
      \bpsizemath{} - \bsizemath{}^2 - \bsizemath{}}{2\epsizemath{} - 2h} &
      \epsizemath{} > \frac{2\bsizemath{}\bpsizemath - \bpsizemath{} + 
      2\bsizemath{}^2 + \bsizemath{}}{2\bsizemath{}} 
\end{cases}$$
where $\epsizemath{} \geq h$ is the size of the \elempart{}, 
\bpsize{} is the size of the \blockpart{}, 
and $h$ is the size of the \opt{}. 
\end{theorem}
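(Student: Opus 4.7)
The plan is to extend the linear-programming framework of Theorems~\ref{thm:ub-ep} and~\ref{thm:ub-bp} by merging their decision variables and constraints into a single LP, and then solving that LP in stages. Let $r$ be the fraction of accesses on which \opt{} converts a temporal-locality opportunity into a hit (each costing $\epsizemath{}$ units of cache area in the rectangle picture, exactly as in Theorem~\ref{thm:ub-ep}), and let $(s,t)$ be the spatial-locality variables of Theorem~\ref{thm:ub-bp}. Because \ourpolicy{} misses on an access only when both of its layers miss, the adversary can sequence the two kinds of worst-case patterns on disjoint accesses, so the total \opt{} hit rate is $r + s(t-1)$, the aggregate cache-area constraint is
\[
 h \;\geq\; r\,\epsizemath{} \;+\; s\sum_{j=0}^{t-1}\bigl(1 + j(\bpsizemath{}/\bsizemath{}+1)\bigr),
\]
the aggregate access constraint is $1 \geq r + st$, and $t$ is bounded by $1 \leq t \leq \bsizemath{}$.

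Rather than attacking this nonlinear program head-on, I would fix $r$ and observe that the residual subproblem in $(s,t)$ is structurally identical to the spatial LP of Theorem~\ref{thm:ub-bp}, only with the cache budget reduced from $h$ to $h - r\,\epsizemath{}$ and the access budget reduced from $1$ to $1-r$. Substituting these modified resources directly into the solution of Theorem~\ref{thm:ub-bp} yields the spatial hit contribution as a closed-form function of $r$; adding $r$ gives the total \opt{} hit rate, and the competitive ratio is then $1/(1 - \text{total})$.

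The final step is to maximize this ratio over $r \in [0, h/\epsizemath{}]$. Differentiating and setting the derivative to zero produces an interior optimum $r^*$ whenever the spatial subproblem stays in the regime $t < \bsizemath{}$; when \epsize{} is small (leaving little room for temporal hits before the cache budget is exhausted) the spatial subproblem instead hits the boundary $t = \bsizemath{}$, and the analysis collapses to the first branch of the $\min$ in Theorem~\ref{thm:ub-bp}. These two regimes produce precisely the two cases in Theorem~\ref{thm:iblp-ub}, and the threshold $\epsizemath{} = (2\bsizemath{}\bpsizemath{} - \bpsizemath{} + 2\bsizemath{}^2 + \bsizemath{})/(2\bsizemath{})$ is the crossover \epsize{} at which the interior optimum meets the $t=\bsizemath{}$ boundary.

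The principal obstacle is twofold. First, one must argue rigorously that the temporal and spatial adversarial patterns can be interleaved without interacting, so that the combined LP has the additive form above and no cross terms appear; this depends crucially on the design choice that the \blockpart{} only services accesses that miss in the \elempart{}, which prevents temporally hot \element{}s from reshuffling the \blockpart{}'s LRU order and thereby decouples the two budgets. Second, the algebra is heavy: after substitution the quantity to maximize is a ratio of quadratics in $r$, and extracting the closed-form break-point and simplifying the two branches into the compact expressions stated in the theorem requires careful case analysis on which of the constraints $t=\bsizemath{}$, $r=0$, and $r=h/\epsizemath{}$ is active.
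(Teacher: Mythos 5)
Your plan is essentially the paper's proof: the same combined LP in $r$, $s$, $t$ with the additive cache-area and access constraints, the same staged solution (fix $r$, solve the residual spatial subproblem with budgets $h - r\epsizemath{}$ and $1-r$, then optimize the resulting ratio of quadratics over $r$), and the same identification of the two cases with an interior optimum versus the $t = \bsizemath{}$ boundary at the stated threshold. One correction: the constraint $t \leq \bsizemath{}$ binds when \epsize{} is \emph{large} (above the threshold), not small---a large \elempart{} makes each temporal hit area-expensive, pushing the unconstrained optimizer toward ever-larger spatial loads---and in that regime the optimization is re-run with $t = \bsizemath{}$ held fixed to obtain the second (linear-in-\epsize{}) branch of Theorem~\ref{thm:iblp-ub}, rather than collapsing to the flat $\bsizemath{}$ cap from Theorem~\ref{thm:ub-bp}.
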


\begin{proof}
As in the previous proofs, 
we define $r$, $s$, and $t$ to be the fraction of accesses that the \opt{} hits 
on due to temporal locality, the fraction of accesses that the \opt{} misses on 
and loads \element{}s for spatial locality, and the number of \element{}s 
loaded for spatial locality, respectively. 

The total number of hits that \opt{} achieves is equal to the sum of the 
hits from the individual localities. 
This results in a competitive ratio of $\frac{1}{1-r-s(t-1)}$. 
We combine the amount of cache space used and the number of accesses 
forced in a similar way. 
This results in the following linear program:

$$\text{Maximize:\;} \frac{1}{1-r-s(t-1)}$$
$$\text{subject to:}$$
$$h \geq{} r \epsizemath{} + s \Big(\sum_{i=0}^{t-1} 1+t(\bpsizemath{}/\bsizemath{} + 1)\Big)$$
$$1 \geq r + st$$
$$\text{where } r \text{, } s \text{, and } t \text{ are the free variables.}$$

Unfortunately, we were unable to solve this linear program directly 
(one hour of computation time in Wolfram Mathematica). 
To obtain a solution, we break the program into smaller chunks that can 
be solved individually. 
We start by modifying the linear program to compute the values of $s$ and $t$ 
that maximize the competitive ratio given a particular $r$ value. 
This results in the following values for $s$ and $t$:

$$s =  \frac{(\bsizemath{} + \bpsizemath{})(1-r)^2}{\bpsizemath{} - 
\bpsizemath{} r + \bsizemath{} (2 h -1+ r - 2 \epsizemath{} r)}$$
$$t = \frac{\bpsizemath{} - \bpsizemath{}r + \bsizemath{}(2 h -1+ r - 2 
\epsizemath{} r)}{(\bsizemath{} + \bpsizemath{}) (1 - r)}$$

When we plug in these values and solve the resulting maximization problem, 
we achieve results for both $r$ and the competitive ratio: 


$$r = \frac{\bpsizemath{} + \bsizemath{} (4 h - 2 \epsizemath{} - 1)}{
    \bpsizemath{} + \bsizemath{} (2 \epsizemath{} - 1)}$$
$$\text{Ratio} = \frac{(\bpsizemath{} + \bsizemath{} (2 i - 1))^2}{8 \bsizemath{} (
    \bsizemath{} + \bpsizemath{}) (\epsizemath{} - h)}$$

This result is a valid upper bound, but fails to account for the constraint 
that $t$ cannot exceed the \block{} size \bsize{}. 
By using the expressions for the values of $r$ and $t$, we find that this 
occurs when 
$\epsizemath{} > \frac{2\bsizemath{}\bpsizemath - \bpsizemath{} + 
    2\bsizemath{}^2 + \bsizemath{}}{2\bsizemath{}}$.
In this region, we know from above that $t$ maxes out at \bsize{}. 
We apply this change to the prior analysis, with the following results:

$$r = \frac{2\bsizemath{}h - \bsizemath{}\bpsizemath{} + \bpsizemath{} - \bsizemath{}^2 - \bsizemath{}}{
    2\bsizemath{}\epsizemath{} - \bsizemath{}\bpsizemath{} + \bpsizemath{} - \bsizemath{}^2 - \bsizemath{}}$$
$$\text{Ratio} = \frac{2\bsizemath{}\epsizemath{} - \bsizemath{}\bpsizemath{} + 
    \bpsizemath{} - \bsizemath{}^2 - \bsizemath{}}{2\epsizemath{} - 2h}$$

Putting these results together finishes the proof. 
\end{proof}

\subsection{Applying the Bound}\label{sec:ub-app}

Having proved a bound on the competitive ratio for \ourpolicy{} as a function 
of the layer sizes, we can can now consider how to partition the cache space. 
This task is complicated by the fact that the optimal partitioning depends on 
the size of the optimal cache being compared against. 

\paragraph{Known optimal size.}
When the size of the optimal cache is known, the optimal layer sizes 
can be directly computed. 
When 
$k \geq (3 \bsizemath{} h - h - \bsizemath{}^2 - \bsizemath{})/(\bsizemath{} - 1)$, 
this results in: 
$$\text{Ratio} = \frac{(k + \bsizemath{} - 1) (k - h + \bsizemath{} (2 h - 1) )}{(k - h + \bsizemath{})^2}$$ 
$$\epsizemath = \frac{k^2 + 4 \bsizemath{} h k - h k + 4 \bsizemath{}^2 h - 
    3 \bsizemath{} h - \bsizemath{}^2}{2 \bsizemath{} k + k + 2 \bsizemath{} h - 
    h + 2 \bsizemath{}^2 - 3 \bsizemath{}}$$ 
For smaller $k$ values, setting $i = k$ (i.e., operating as an \elempol{}) provides the minimum competitive ratio 
of: 
$$\frac{2 \bsizemath{} k - \bsizemath{}^2 - \bsizemath{}}{2(k-h)}$$
This transition occurs at the point where the competitive ratio due to temporal 
locality exceeds the maximum competitive ratio that can be achieved due to 
spatial locality ($\bsizemath{} \times$). 
In other words, for small $k$ values (relative to $h$), temporal locality dominates performance. 

Again considering large caches with large \block{}s
($k > h \gg \bsizemath{} \gg 1)$,
we see that the ratio is roughly 
$\frac{k(k+2\bsizemath{}h)}{(k-h)^2}$ if $k \geq{} 3h$ 
and $\frac{\bsizemath{}k}{k-h}$ if $k < 3h$. 

Figure~\ref{fig:bound-comp-graph} shows how this upper bound 
compares to the lower bound, as well as single-granularity caches 
of the same size running LRU. 
\ourpolicy{} outperforms the small-granularity Item Cache for $k \approx{} 3h$ and 
larger, and it outperforms the large-granularity Block Cache for 
$k \approx{} 4\bsizemath{}h$ and smaller.
In addition, \ourpolicy{} performs close to optimal for all values of $k$, 
whereas the performance of the baselines degrades severely outside of their 
ideal performance conditions. 

Table~\ref{tab:int-points} shows how this bound compares to traditional 
caching and our lower bound. 
Our upper bound has roughly the same penalty to 
augmentation and competitive ratio as our lower bound, 
differing by at most a multiplicative factor of $3\times$. 
Comparing against the points of interest from the lower bound, we see that 
the competitive ratio is $\approx 2\bsizemath{}$ when $k = 2h$, 
$k \approx \bsizemath{}h$ yields a competitive ratio of $\approx 3$, 
and the meeting point occurs when $k \approx \sqrt{2\bsizemath{}}h$. 

\begin{figure}
\includegraphics[width=\linewidth]{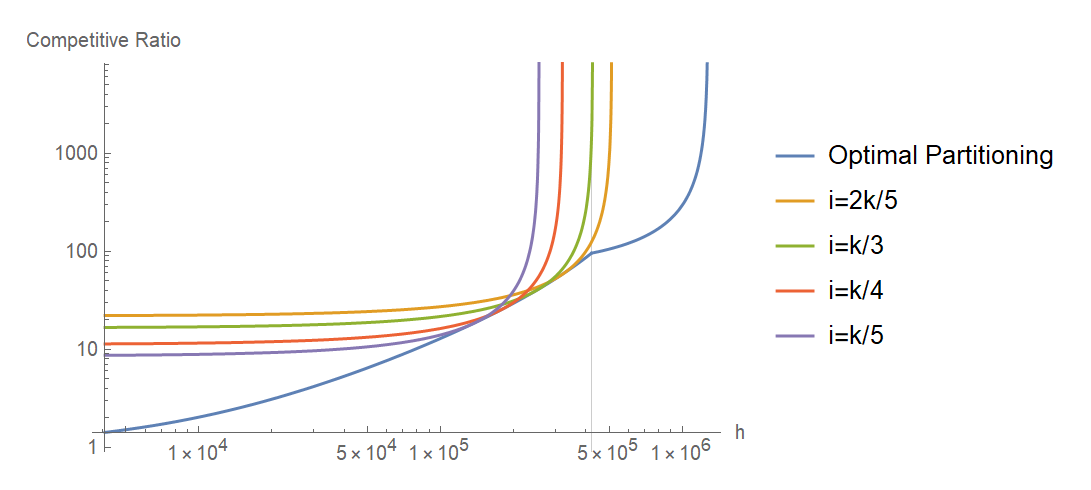}
\caption{
A graph showing how the upper bound of \ourpolicy{} with 
constant layer sizes performs compared to the optimal layer sizes. 
The x-axis is the size of the optimal cache and the y-axis is the competitive 
ratio (lower is better). 
In this graph, the online cache size $k = 1.28M$ and \block{} size 
$\bsizemath{} = 64$. 
}
\label{fig:fixed-size-comp}
\end{figure}

\paragraph{Unknown optimal size.}
The optimal layer sizes in IBLP depend on the size of the optimal cache $h$
being compared against. 
For any fixed layer sizes, the competitive ratio will be optimal
at only one value of $h$, but show significant degradation for 
larger $h$ and limited improvement for smaller $h$ 
(see Figure~\ref{fig:fixed-size-comp}). 

This dependency on the size of the optimal cache 
is unique amongst known caching problems. 
Unlike other caching problems, 
the competitive ratios due to temporal locality and spatial locality 
are different functions of the optimal cache size. 
As a result, the relative performance of traces changes depending on optimal 
cache size, which results in different values of $h$ having 
different worst-case traces. 
This suggests that a full understanding of the \problem{} 
requires analysis beyond competitive ratios.

\section{Analysis of Randomized Algorithms}
\label{sec:rand}

Our previous analysis showed that the relative competitiveness of different replacement policies may depend on the size of the policy compared against. 
This is undesirable, since we would like to be able to compare policies without relying on a hypothetical comparison point. 
In this section, we explore the randomized replacement policies in the \problem{}.
One would hope that randomized policies, which have can more easily compare against
Unfortunately, this hope proves false; we are not able to eliminate the impact of comparison size on relative competitiveness. 

\subsection{A Randomized Policy}
\label{sec:rand-pol-desc}

We design a randomized policy by extending the ideas of marking algorithms to the \problem{}. 
As with traditional marking algorithms, \element{}s can be either marked or unmarked. 
We mark \element{}s when they are requested. 
Evictions must choose from unmarked \element{}s, unless all \element{}s are marked, in which case all marks are removed and then an \element{} is evicted at random. 

Our policy, known as \ourrandpolicyfull{} (\ourrandpolicy{} for short), accounts for granularity change by loading each \element{} in the accessed \block{}, but \emph{not} marking them. 
This results in a system where \element{}s that show spatial but not temporal locality are loaded into cache but do not result in the eviction of \element{}s with temporal locality. 
In the special (but common) case where the number of unmarked \element{}s in the cache is smaller than the \block{} size but greater than zero, the requested \element{} is loaded and then the remaining unmarked \element{}s in cache are replaced by randomly selected \element{}s from the accessed \block{}. 

We will briefly compare \ourrandpolicy{} with similar policies that make different choices for handling granularity change. 
First, we compare to a marking algorithm that ignores granularity change. 
It is easy to show that this policy has a competitive ratio of at least \bsize{} regardless of size (as long as the \opt is \bsize{} or larger) by repeatedly choosing a new \block{} and accessing each \element{} in it. 
This shows the importance of loading additional \element{}s from the requested \block{}. 
However, a policy that loads and marks every \element{} in the \block{} also has issues. 
In particular, like the analysis of \blockpol{}s in Section~\ref{sec:block-pol-lb}, when the trace does not provide spatial locality, the effective size of the cache is reduced by the excess \element{}s. 
Unlike the deterministic setting, there may be value in a policy that loads some but not all of the \element{}s in the accessed \block{}. 
However, due to the issues with competitive ratios in this setting, we did not pursue this analysis in detail. 

\subsection{Analyzing Relative Competitiveness}
\label{sec:rand-rel}

We now show that randomized algorithms in the \problem{} do not bypass the issue where their relative competitiveness may depend on the size of the cache compared against. 
In particular we continue to have the issue where policies must decide how many \element{}s to load on each access. 
Similar to the above analysis, caches that load few \element{}s perform well compared to \opt{}s of similar size, when each line devoted to temporal locality matters significantly. 
However, this hinders their performance when compared against \opt{}s of significantly smaller size, where spatial locality becomes more important. 
Caches that load many \element{}s on each access suffer from the opposite set of tradeoffs. 
This relative change is not affected by randomness, leaving our issue unresolved. 

\section{Beyond Competitive Ratios}
\label{sec:beyond-comp-rat}

We would like to be able to use the \problem{} to design and analyze caches independent of a hypothetical comparator. 
However, competitive ratios seem to require this knowledge to provide insight into policy performance. 
To remedy this, we turn to the locality of reference model discussed in Section~\ref{sec:model}. 
In this model, the functions $f(n)$ and $g(n)$ provide bounds on the number of \element{}s and \block{}s, respectively, seen in a window of $n$ accesses. 
We use these functions to provide bounds on the fault rates of deterministic policies in the \problem{}. 


\subsection{Lower Bound}
The addition of spatial locality in the \prob{} means that the lower bound on fault rates in the traditional locality of reference model does not apply. 
We provide a lower bound for deterministic policies in this new model. 


\begin{theorem}\label{thm:lor-lb}
Any deterministic replacement policy has a fault rate of at least: 
$$ \frac{g(f^{-1}(k+1) - 2)}{f^{-1}(k+1) - 2} $$
where $k$ is the size of the cache. 
\end{theorem}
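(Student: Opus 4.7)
The plan is to adapt the classical AFG lower-bound technique to the block-aware setting: construct an adversarial trace respecting both the item locality function $f$ and the block locality function $g$, and then show that any deterministic online policy $P$ must fault on roughly one access per new block in each phase. Let $N = f^{-1}(k+1)$, so $N$ is the shortest window in which the adversary is allowed to access $k{+}1$ distinct items. I would set the phase length to $n = N - 2$ and show that the adversary can force $g(n)$ faults in every phase, giving fault rate $g(n)/n$.

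The adversary would be constructed online, simulating $P$ step by step (possible since $P$ is deterministic, so its cache state is known). I would organize accesses into non-overlapping phases of length $n$. Within a phase, the adversary picks $g(n)$ distinct blocks and for each block selects one item not currently in $P$'s cache (which always exists provided the block universe is large enough, since $P$ holds only $k$ items). It accesses those $g(n)$ items, forcing $g(n)$ faults; the remaining $n - g(n)$ accesses in the phase are devoted to repeated (filler) accesses to items currently in cache, contributing hits but no additional distinct items or blocks. Summing over phases, the fault rate is at least $g(n)/n$.

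The key feasibility check is that the constructed trace actually respects $f$ and $g$ in \emph{every} window, not just within phases. Within a phase, the distinct-item count is at most $f(n)$ and the distinct-block count is at most $g(n)$, by construction. The subtle case is a sliding window of length up to $N$ that straddles the boundary between two consecutive phases; here the $-2$ slack is critical. By selecting the new phase's $g(n)$ blocks to overlap with blocks the adversary may re-use and by ordering filler accesses near phase boundaries to reuse items still in cache, any cross-boundary window of length $\leq N$ sees at most $f(N) = k+1$ distinct items and at most $g(N)$ distinct blocks, respecting the locality functions. The two extra accesses absorbed by $N - n = 2$ give exactly the padding needed to close any window consistency gap between successive phases.

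The main obstacle is precisely this cross-phase consistency argument: showing that the online adversary can \emph{simultaneously} (i) always find $g(n)$ blocks with uncached items to target each phase, (ii) respect the locality bound $f$ in arbitrary sliding windows, and (iii) respect the locality bound $g$ in arbitrary sliding windows. The first follows from a pigeonhole argument, but the latter two require careful sequencing of fault-inducing versus filler accesses near phase boundaries, and that is where the $-2$ appears. Once the construction is shown valid, the faults-per-access count is immediate and yields the claimed lower bound $g(f^{-1}(k+1) - 2) / (f^{-1}(k+1) - 2)$.
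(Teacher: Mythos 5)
Your high-level plan --- phases of length $f^{-1}(k+1)-2$ with $g$ of that many forced faults per phase --- matches the paper's, but the trace you construct does not actually satisfy the locality constraints, and fixing it changes the argument in an essential way. The paper (following Albers et al.) uses a \emph{fixed} set of $k+1$ items for the entire trace, partitioned into at most $g(f^{-1}(k+1)-2)$ blocks, and recycles these same items phase after phase; that is what keeps every window, including windows spanning many phases, within $f$ and $g$. You instead draw fresh blocks ``from a large enough universe'' each phase. Over a window covering $L$ phases this accumulates on the order of $L\cdot g(n)$ distinct items, which for concave $f$ (e.g., $f(m)=\sqrt{m}$ with $g=f$) exceeds $f(Ln)=\sqrt{L}\,f(n)$ already at $L=2$; the $-2$ slack absorbs only two boundary accesses and cannot repair this cumulative violation. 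There is also a within-phase problem: you bunch the $g(n)$ fault-inducing accesses together and pad with filler, but a short window containing all $g(n)$ distinct items requires $f(\text{window length})\geq g(n)$, which fails for sublinear $f$. The paper avoids this with the repetition schedule (repetition $j$ begins at access $f^{-1}(j+1)-1$ of the phase), which meters out new distinct items exactly as fast as $f$ permits, with each repetition consisting of repeated accesses to a single item.

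Once the item set is fixed, the reason the numerator is $g(\cdot)$ rather than $k-1$ is also different from what you describe. The adversary would like to access, at every repetition, the one item among the $k+1$ not currently in the online cache (as in the classical bound), but that item may lie in a block not yet touched in the current window, and the $g$ constraint only licenses introducing a new block $g(p)$ times in a phase of length $p$. Those $g(p)$ repetitions are the ones where the adversary is guaranteed a free choice of item and hence a fault; the remaining repetitions may be constrained and yield no guaranteed miss. This is the step your pigeonhole sketch needs to be replaced by.
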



\begin{proof}
Following the ideas of Albers et al.~\cite{albers2005paging}, we will construct a family of traces such that for any $n$, there exists a trace of length $n$ in the family where the policy will have a fault rate no lower than the bound. 
We construct traces in our family using $k+1$ distinct \element{}s. 
Due to the locality constraints, these \element{}s can be partitioned into at most $g(f^{-1}(k+1) - 2)$ \block{}s. 
We generate traces in phases, where each phase consists of $f^{-1}(k+1) - 2$ accesses divided into $k-1$ repetitions. 
A repetition consists of repeated accesses to a single \element{} that has not yet been accessed this phase. 
In each phase, repetition $1 \leq{} j \leq{} k-1$ starts with the $f^{-1}(j+1) - 1$th access of that phase and continues until the access before the next block starts. 
We can apply the techniques of Albers et al.~\cite{albers2005paging} to show that these traces are consistent with $f(n)$. 


It remains to discuss $g(n)$ and show the minimum fault rate of policies on these traces. 
In the work of Albers et al.~\cite{albers2005paging}, the \element{} chosen is the one of the $k+1$ used in the trace that is not in the cache, and therefore every repetition causes one miss. 
However, in our model, our ability to choose \element{}s is limited by the $g(n)$ function. 
In particular, the \element{} not in cache may be from a \block{} that has not been accessed yet in the phase. 
If this is the case, then choosing that \element{} increases the number of \block{}s accessed in the window, which may cause a violation. 
However, it is known that a new \block{} can be chosen at least $g(p)$ times for a phase of length $p$. 
Each of these new \block{} choices returns us the freedom to guarantee that we can pick the \element{} that is not in cache. 
\end{proof}

\subsection{Upper Bound}
We now provide an upper bound for the fault rate of \ourpolicy{}. 
Since this policy consists of two layers of cache acting in concert, both layers must miss in order for the entire cache to miss.
We therefore begin by providing bounds for both layers individually. 

\begin{theorem}
The fault rate of the \elempart{} of \ourpolicy{} is at most:
$$ \frac{\epsizemath{} - 1}{f^{-1}(\epsizemath{} +1) - 2} $$
where \epsize{} is the size of the \elempart{}. 
\end{theorem}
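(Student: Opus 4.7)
The plan is to observe that the item layer of \ourpolicy{} serves every access first and uses LRU eviction at item granularity with capacity $\epsizemath$. Consequently, the sequence of item-layer faults on the trace is exactly the sequence of faults that a standalone LRU cache of size $\epsizemath$ would incur, so the claim reduces to bounding the fault rate of classical LRU of size $\epsizemath$ under the locality constraint imposed by $f$. This lets me ignore the \blockpart{} entirely and reuse (essentially verbatim) the LRU analysis of Albers, Favrholdt, and Giel.

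The key lemma I would prove is that any window of $\epsizemath - 1$ consecutive item-layer faults must contain at least $f^{-1}(\epsizemath + 1) - 2$ accesses. To establish this, fix such a window $W$ of length $w$, and extend it on one side to include the access immediately preceding the first fault. I would then argue that the extended window contains at least $\epsizemath + 1$ distinct items: the $\epsizemath - 1$ faulting items must themselves be distinct (if an item faulted twice, LRU would have had to evict it in between, which requires $\epsizemath$ other distinct items accessed, already giving $\geq \epsizemath + 1$ distinct items inside $W$); and two further distinct items come from the preceding access together with some item resident in cache at the first fault but not touched by any subsequent fault. Since any set of $\epsizemath + 1$ distinct items is bounded by $f$ applied to the window length, this gives $w + 2 \geq f^{-1}(\epsizemath + 1)$, and therefore $w \geq f^{-1}(\epsizemath + 1) - 2$.

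Given the lemma, I would partition the trace into consecutive, non-overlapping blocks of $\epsizemath - 1$ item-layer faults (possibly with a short remainder at the end, whose contribution vanishes in the long-run fault rate). Each such block spans at least $f^{-1}(\epsizemath + 1) - 2$ accesses, so the ratio of faults to accesses in each block is at most $(\epsizemath - 1)/(f^{-1}(\epsizemath + 1) - 2)$. Summing over blocks and taking the trace-wide ratio yields the stated upper bound.

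The main obstacle will be the accounting in the key lemma, specifically making sure that the extended window truly witnesses $\epsizemath + 1$ distinct items while using only $w + 2$ access slots, so that the ``$-2$'' in the denominator is tight. The delicate case is when the $\epsizemath - 1$ faulting items repeat a subset of items already present in cache: here I would need to carefully invoke the LRU eviction order to identify a previously cached but untouched item and the access just before $W$ as the two additional distinct references. Beyond this combinatorial bookkeeping, the remainder of the proof is routine and parallels the standard LRU analysis in the locality-of-reference framework.
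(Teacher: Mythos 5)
Your proposal is correct and takes essentially the same approach as the paper: both observe that the \elempart{} behaves exactly as a standalone LRU cache of size $\epsizemath$ (the block layer never affects its contents), and then apply the Albers--Favrholdt--Giel fault-rate bound for LRU with that cache size. The only difference is that you re-derive that bound via the window-of-$\epsizemath-1$-faults lemma rather than citing it, which is precisely the argument underlying the cited result.
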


\begin{proof}
The \elempart{} is simply an LRU cache that operates as though it were in the traditional model. 
The change in model cannot cause the fault rate to increase, since it only introduces new ways for a policy to hit. 
This means that we can rely on the result from Albers et al.~\cite{albers2005paging} on bounds on LRU policies in the traditional model.
Plugging in the size of the \elempart{} as the cache size to this provides our result. 
\end{proof}

The upper bound for the fault rate of the \blockpart{} requires a slight transformation before relying on the same straightforward application of the traditional bound. 

\begin{theorem}
The fault rate of the \blockpart{} of \ourpolicy{} is at most:
$$ \frac{\bpsizemath{} / \bsizemath{} - 1}{f^{-1}(\bpsizemath{} / \bsizemath{} +1) - 2} $$
where \epsize{} is the size of the \elempart{}. 
\end{theorem}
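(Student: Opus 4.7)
The plan is to mirror the proof of the analogous \elempart{} bound: I reduce the \blockpart{} to an ordinary LRU cache on a coarsened universe and then invoke the LRU fault-rate bound of Albers et al.~\cite{albers2005paging} that was just applied to the \elempart{}.

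First, I would observe that the \blockpart{} always loads and evicts at \block{} granularity and orders its LRU list over \block{}s. Treating each \block{} as a single atomic unit therefore turns the \blockpart{} into exactly an ordinary LRU cache of capacity $\bpsizemath{}/\bsizemath{}$. I would then coarsen the trace by mapping every item access to an access of its enclosing \block{}. Repeated item accesses within a single \block{} collapse to consecutive block accesses that are trivially hits, and a fault of the \blockpart{} on the original trace corresponds exactly to a fault of a standard LRU cache on the coarsened trace.

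Second, I would verify that the coarsened trace satisfies the locality-of-reference hypothesis that the Albers et al. bound requires. By definition of $g$, any window of $n$ original accesses contains at most $g(n)$ distinct \block{}s, so the coarsened trace is governed by the function $g$ in exactly the role that $f$ plays at item granularity. Substituting cache size $\bpsizemath{}/\bsizemath{}$ and locality function $g$ into the standard LRU bound produces a per-access fault rate of $(\bpsizemath{}/\bsizemath{} - 1)/(g^{-1}(\bpsizemath{}/\bsizemath{}+1) - 2)$, which matches the stated denominator once the appropriate window function is identified (the statement uses the symbol $f^{-1}$, but the quantity being inverted is the block-granularity working-set function).

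The main point that needs care is the reduction itself. Because \ourpolicy{}'s \blockpart{} is fed only the accesses that miss in the \elempart{}, the trace it actually sees is a sub-trace of the coarsened trace, not the coarsened trace itself. I would argue that filtering out some accesses can only decrease the number of distinct \block{}s appearing in any window of a given length, so the locality-of-reference hypothesis with function $g$ still holds on the filtered trace; hence the LRU bound applies unchanged. Once this is pinned down, the rest is an immediate plug-in to the previous theorem and requires no additional machinery.
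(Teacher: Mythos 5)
Your overall route is the same as the paper's: treat the \blockpart{} as an ordinary LRU cache of capacity $\bpsizemath{}/\bsizemath{}$ over a \block{}-granularity trace and plug $g$ into the Albers et al.\ LRU bound; you also correctly notice that the $f^{-1}$ in the statement is really the inverse of the \block{}-level working-set function $g$, a point the paper's proof makes implicitly ("using $g(n)$ as the \element{}s per window function"). So the reduction and the invocation of the prior theorem match the paper.

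However, the one step you single out as "the main point that needs care" is justified incorrectly. You claim that filtering the trace (keeping only \elempart{} misses) "can only decrease the number of distinct \block{}s appearing in any window of a given length." The opposite is true: filtering compresses the trace, so a window of $n$ surviving accesses corresponds to a window of $m \ge n$ accesses of the original trace and can contain up to $g(m) > g(n)$ distinct \block{}s. Concretely, if the original trace is many repeats of one \block{} followed by a burst of new \block{}s, the \elempart{} absorbs the repeats and the filtered trace is a dense sequence of distinct \block{}s that violates $g$-consistency. The correct repair is to measure the fault rate against the \emph{original} trace length: any segment of the original trace in which the \blockpart{} incurs $\bpsizemath{}/\bsizemath{}-1$ faults must touch $\bpsizemath{}/\bsizemath{}+1$ distinct \block{}s (among the surviving accesses, hence among all accesses in that segment), so that segment has length at least $g^{-1}(\bpsizemath{}/\bsizemath{}+1)$ in original-trace accesses, which yields the claimed rate. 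The paper elides this entirely, so your instinct to address it is right, but the lemma you would lean on is false as stated and the argument needs to be rerouted through the enclosing unfiltered window.
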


\begin{proof}
The \blockpart{} is a \blockpol{}, meaning that it loads and evicts at the granularity of \block{}s. 
Thus, the behavior of the \blockpart{} depends on the \block{} that is accessed, but is independent of the particular \element{}. 
We can combine these facts to view the \blockpart{} as an LRU cache with effective size $\bpsizemath{} / \bsizemath{}$ serving a trace where requests are to \block{}s. 
Substituting in the effective size of the cache and using the number of \block{}s in a window $g(n)$ as the \element{}s per window function, we achieve the resulting bound. 
\end{proof}

Taking the minimum of these fault rates for a given input provides an upper bound on the fault rate of \ourpolicy{}. 

\begin{theorem}
The fault rate of \ourpolicy{} with \elempart{} size \epsize{} and \blockpart{} size \bpsize{} is upper bounded by: 
$$ \min{} \left( \frac{\epsizemath{} - 1}{f^{-1}(\epsizemath{} +1) - 2}, \frac{\bpsizemath{} / \bsizemath{} - 1}{f^{-1}(\bpsizemath{} / \bsizemath{} +1) - 2} \right)$$
where $f(n)$ is a function mapping a window size to the maximum number of \element{}s accessed in a window of size $n$
and $f(n)$ is a function mapping a window size to the maximum number of \element{}s accessed in a window of size $n$. 
\end{theorem}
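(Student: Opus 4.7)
The plan is to reduce the statement to the two preceding per-layer theorems by exploiting the structural fact that IBLP faults only when both of its layers fault. By construction of IBLP (as described in Section~\ref{sec:ub-policy-desc}), every access is first served by the item layer; if that misses, it is forwarded to the block layer. Consequently, an access is a miss for IBLP if and only if it misses in the item layer \emph{and} the block layer fails to serve the forwarded request. In particular, the set of IBLP-faulting accesses is contained in the set of item-layer-faulting accesses and also in the set of block-layer-faulting accesses.

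From this containment I would immediately conclude that the number of IBLP faults is at most the minimum of the two per-layer fault counts, and, dividing by the length of the trace, the fault rate of IBLP is at most the minimum of the two per-layer fault rates. Plugging in the bound from the item-layer theorem gives the first term $(i-1)/(f^{-1}(i+1)-2)$, and plugging in the bound from the block-layer theorem gives the second term $(b/B - 1)/(f^{-1}(b/B+1)-2)$. Taking the minimum closes the argument.

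The only subtlety, and the one I expect to be the main obstacle, is that the block layer in IBLP does not see the full trace but only the subsequence of accesses that miss in the item layer, whereas the block-layer theorem was stated with respect to an arbitrary trace. I would handle this by noting that the locality functions $f$ and $g$ are upper bounds on working-set sizes over any window of the original trace, and that these upper bounds remain valid when one passes to any subsequence, because removing accesses can only fail to increase the number of distinct items or blocks encountered in a window. Alternatively, one can simply observe that (block-layer faults)/(length of full trace) $\leq$ (block-layer faults)/(length of block-layer input), so the previously established fault-rate bound is only further slackened when measured against all accesses. Either viewpoint suffices to carry the per-layer bounds into the combined statement without further calculation.
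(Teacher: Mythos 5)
Your proposal matches the paper's (essentially one-sentence) argument: IBLP faults only on accesses that fault in both layers, so its fault count, and hence its fault rate over the full trace, is at most the minimum of the two per-layer quantities already bounded by the preceding theorems. Your added care about the block layer seeing only the item-layer misses goes beyond what the paper writes down and is sound, provided the windows are measured in the \emph{original} trace (filtering can only reduce the number of distinct blocks appearing in such a window, so the time between block-layer faults is still lower-bounded via $g$); note that the analogous claim for windows of the filtered subsequence itself would be false.
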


\begin{table}[t]
  \footnotesize
  \resizebox{\linewidth}{!}{%
    \newcommand{\headerfont}[1]{#1}
\newcommand{\entryfont}[1]{{\normalsize #1}}
\begin{tabular} {@{}cc|ccc@{}}
\toprule
\bf f(n) & \bf g(n) & \bf Lower Bound & \bf \elempart{} UB & \bf \blockpart{} UB \\
\midrule
\headerfont{$x^{1/2}$} & \headerfont{$x^{1/2}$} & \entryfont{$1/h$}   & \entryfont{$1/\epsizemath{}$} & \entryfont{$\bsizemath{}/\bpsizemath{}$} \\
\headerfont{$x^{1/2}$} & \headerfont{$x^{1/2}/\bsizemath^{1/2}$} & \entryfont{$1/(\bsizemath^{1/2}h)$}   & \entryfont{$1/\epsizemath{}$} & \entryfont{$1/\bpsizemath{}$} \\
\headerfont{$x^{1/2}$} & \headerfont{$x^{1/2}/\bsizemath{}$} & \entryfont{$1/\bsizemath{}h$}   & \entryfont{$1/\epsizemath{}$} & \entryfont{$1/\bsizemath{}\bpsizemath{}$} \\ 
\headerfont{$x^{1/p}$} & \headerfont{$x^{1/p}$} & \entryfont{$1/h^{p-1}$}   & \entryfont{$1/\epsizemath^{p-1}$} & \entryfont{$\bsizemath^{p-1}/\bpsizemath^{p-1}$} \\
\headerfont{$x^{1/p}$} & \headerfont{$x^{1/p}/\bsizemath^{1/2}$} & \entryfont{$1/(\bsizemath^{(p-1)/p}h^{p-1})$}   & \entryfont{$1/\epsizemath^{p-1}$} & \entryfont{$1/\bpsizemath^{p-1}$} \\
\headerfont{$x^{1/p}$} & \headerfont{$x^{1/p}/\bsizemath{}$} & \entryfont{$1/\bsizemath{}h^{p-1}$}   & \entryfont{$1/\epsizemath^{p-1}$} & \entryfont{$1/\bsizemath{}\bpsizemath^{p-1}$} \\ 
\bottomrule
\end{tabular}
    }
  \caption[Salient bounds]{Salient bounds
    for comparing an equally split cache ($\epsizemath{} = \bpsizemath{}$) to the lower bound for a cache of half the size ($h = \epsizemath + \bpsizemath{}$).}
\label{tab:lor-points}
\end{table} 

\subsection{Analysis}
We have provided a method for analyzing the performance \ourpolicy{} or other policies in the \prob{} that does not rely on the existence of a hypothetical optimal comparison point. 
We now show how to apply this analysis. 
To do this, we will analyze \ourpolicy{} with equal partition sizes ($\epsizemath{} = \bpsizemath{}$).

For this analysis, we will consider polynomial locality functions (i.e., $f^{-1}(n) = cn^{p}$ for some real numbers $c$ and $p$). 
Since locality functions must be positive concave functions, this covers the majority of high order terms that would occur in real traces. 
Exponential functions are also possible, but such functions signify an extremely high degree of locality, where even small, simple caches should perform well. 
We will compare our result with the lower bound for a cache of the same size as each partition, which provides a result roughly equivalent to competitive ratio with an augmentation factor of $2\times$ (i.e., $\epsizemath{} + \bpsizemath{} = k = 2h$).


When $f(n) = g(n)$, there is no spatial locality. 
In this situation, the upper bound for the \elempol{} matches the lower bound of the baseline, while the \blockpol{} will be off by a factor of $\bsizemath{}^{p-1}$ (recall that $p$ is the degree of the high order term of $f(n)$). 
With maximum spatial locality, $f(n) = \bsizemath{} g(n)$. 
In this situation, the upper bound for the \blockpol{} matches the lower bound of the baseline, while the \elempol{} will be off by a factor of $\bsizemath{}$. 
The largest gap between the baseline and the upper bound for \ourpolicy{} occurs when the ratio between $f(n)$ and $g(n)$ is $\bsizemath^{1-(1/p)}$. 
This represents input with very high, but not maximal, spatial locality. 
For input with this locality, the upper bounds for both partitions meet at $1 / \epsizemath^{p-1}$. 
This ends up differing from the baseline by a factor equal to the ratio of $f(n)$ and $g(n)$, $\bsizemath^{1-(1/p)}$. 
As the value of $p$ approaches $\infty$, this gap approaches \bsize{}. 

There are two primary takeaways from this analysis. 
The first is that the two methods of analysis are in agreement, as our competitive ratio results also show a ratio of \bsize{} for an augmentation factor of 2. 
The second is that the performance of \ourpolicy{} is worst when locality is high (large $p$ implies high temporal locality and large ratios of $f(n)$ and $g(n)$ implies high spatial locality). 
When locality is so high, our miss rate will be very low. 
Therefore, the relatively large multiplicative factor gap will result in a small number of additive misses. 
This suggests that \ourpolicy{} will perform well in practice.



\section{Conclusion}
\label{sec:conc}

In this work, we have provided a theoretical foundation for the study of spatial locality in caching. 
We started with a traditional treatment of the new \problem{}, 
showcasing the changes caused by spatial locality through
proving an NP-completeness result for the offline problem 
and new lower bounds on competitive ratios. 
We used the insights gained from these works to develop \ourpolicy{} 
and provide strong, provable bounds for its performance. 
Our upper bound illustrates how competitive ratios become a flawed metric in the \problem{} due to a dependency on the size of the hypothetical comparison point. 
We solve this problem by extending a locality of reference model to the \problem{}, 
and provide analysis of \ourpolicy{} in this new model. 

\paragraph{Acknowledgments.}
  Supported in part by NSF grants
  CCF-1919223, CCF-2028949, 
  a Google Research Scholar Award, 
  a VMware University Research Fund Award, and by the Parallel Data Lab (PDL)
  Consortium. 

%
\bibliographystyle{abbrv}
\bibliography{ref}

%

\end{document}